\title{Towards a Characterization of Leaf Powers by Clique Arrangements}
\titlerunning{Towards a Characterization of Leaf Powers by Clique Arrangements} 
\author[1]{Ragnar Nevries}
\author[2]{Christian Rosenke}
\affil[1]{Department of Computer Science, University of Rostock\\
\texttt{ragnar.nevries@uni-rostock.de}}
\affil[2]{Department of Computer Science, University of Rostock\\
\texttt{christian.rosenke@uni-rostock.de}}
\authorrunning{R. Nevries and C. Rosenke}
\subjclass{G.2.2 Graph Theory} 
\keywords{Leaf Powers, Clique Arrangement, Strongly Chordal Graphs, NeST Graphs}
\newtheorem{clai}{Claim}
\newcommand{\etal}{et\,al.\ }
\newcommand{\nad}{\ensuremath{{\mid}}}
\newcommand{\ad}{\mathord{-}}
\newcommand{\dad}{\mathord{\rightarrow}}
\newcommand{\N}{\ensuremath{\mathbbm{N}}}
\newcommand{\cA}{\ensuremath{{\cal A}}}
\newcommand{\cC}{\ensuremath{{\cal C}}}
\newcommand{\cE}{\ensuremath{{\cal E}}}
\newcommand{\cL}{\ensuremath{{\cal L}}}
\newcommand{\cX}{\ensuremath{{\cal X}}}
\begin{document}

\maketitle

\begin{abstract}
The class $\cL_k$ of $k$-leaf powers consists of graphs $G=(V,E)$ that have a $k$-leaf root, that is, a tree $T$ with leaf set $V$, where $xy \in E$, if and only if the $T$-distance between $x$ and $y$ is at most $k$.
Structure and linear time recognition algorithms have been found for $2$-, $3$-, $4$-, and, to some extent, $5$-leaf powers, and it is known that the union of all $k$-leaf powers, that is, the graph class $\cL = \bigcup_{k=2}^\infty \cL_k$, forms a proper subclass of strongly chordal graphs.
Despite from that, no essential progress has been made lately.

In this paper, we use the new notion of clique arrangements to suggest that leaf powers are a natural special case of strongly chordal graphs.
The clique arrangement $\cA(G)$ of a chordal graph $G$ is a directed graph that represents the intersections between maximal cliques of $G$ by nodes and the mutual inclusion of these vertex subsets by arcs.
Recently, strongly chordal graphs have been characterized as the graphs that have a clique arrangement without bad $k$-cycles for $k \geq 3$.
We show that the clique arrangement of every graph of $\cL$ is free of bad $2$-cycles.
The question whether this characterizes the class $\cL$ exactly remains open.
\end{abstract}

\section{Introduction}

Leaf powers are a family of graph classes that has been introduced by Nishimura \etal \cite{NisRagThi2002} to model the problem of reconstructing phylogenetic trees.
In particular, a given finite simple graph $G=(V,E)$ is called the $k$-leaf power of a tree $T$ for some $k \geq 2$, if $V$ is the set of leaves in $T$ and any two distinct vertices $x,y \in V$ are adjacent, that is $xy \in E$, if and only if the distance of $x$ and $y$ in $T$ is at most $k$.
For all $k \geq 2$, the class of graphs that are a $k$-leaf power of some tree, is simply called $k$-leaf powers and denoted by $\cL_k$.
The general problem, from a graph theoretic point of view, is to structurally characterize $\cL_k$ for all fixed $k \geq 2$ and to provide efficient recognition algorithms.

Obviously, a graph $G$ is a $2$-leaf power, if and only if it is the disjoint union of cliques, that is, $G$ does not contain a chordless path of length $2$.
Dom \etal \cite{DomGuoHueNie2004,DomGuoHueNie2005} prove that $3$-leaf powers are exactly the graphs that do not contain an induced bull, dart, or gem.
Brandstädt \etal \cite{BraLe2006} contribute to the characterization of $3$-leaf powers by showing that they are exactly the graphs that result from substituting cliques into the nodes of a tree.
Moreover, they give a linear time algorithm to recognize $3$-leaf powers building on their characterization.
A characterization of $4$-leaf powers in terms of forbidden subgraphs is yet unknown.
However, basic $4$-leaf powers, the $4$-leaf powers without true twins, are characterized by eight forbidden subgraphs \cite{Raute2006}. 
The structure of basic $4$-leaf powers has further been analyzed by Brandstädt \etal \cite{BraLeSri2006}, who provide a nice characterization of the two-connected components of basic $4$-leaf powers that leads to a linear time recognition algorithm even for $4$-leaf powers.
For $5$-leaf powers, a polynomial time recognition was given in \cite{ChaKo2007}.
However, no structural characterization is known, even for basic $5$-leaf powers. 
Only for distance-hereditary basic $5$-leaf powers a characterization in terms of $34$ forbidden induced subgraphs has been discovered \cite{BraLeRau2006}.
Except from the result in \cite{BraWag2008} that $\cL_k \subseteq \cL_{k+1}$ is not true for every $k$, there have not been any more essential advances in determining the structure of $k$-leaf powers for $k \geq 5$ since 2007.
Instead, research has focused on generalizations of leaf powers \cite{BraLe2008,BraWag2007}, which also turned into dead ends, very soon.

On the other hand, if we push $k$ to infinity, then it turns out that not every graph is a $k$-leaf power for some $k \geq 2$.
In particular, a $k$-leaf power is, by definition, the subgraph of the $k$-th power of a tree $T$ induced by the leaves of $T$.
Since trees are sun-free chordal and as taking powers and induced subgraphs do not destroy this property, it follows trivially that every $k$-leaf power, despite the value of $k$, is strongly chordal \cite{Far1983}.
But even not every strongly chordal graph is a $k$-leaf power for some $k \geq 2$.
In fact, we are aware of exactly one counter example, which has been found by Bibelnieks \etal \cite{BibDea1993} and is shown as $G_7$ in Figure \ref{fig:counterexamples}.
Insofar, it is reasonable to ask for a precise characterization of the graphs that are not a $k$-leaf power for any $k \geq 2$.
This problem can equivalently be formulated as to describe the graphs in the class $\cL = \bigcup_{k = 2}^\infty \cL_k$, which we call leaf powers, for short.

Interestingly, Brandstädt \etal \cite{BraHunManWag2009} show that $\cL$ coincides with the class of fixed tolerance NeST (neighborhood subtree tolerance) graphs, a well-known graph class with an absolutely different motivation given by Bibelnieks \etal \cite{BibDea1993}.
Naturally, characterizations and an efficient recognition algorithms for this class are also open questions today.
However, by Brandstädt \etal \cite{BraHun2008,BraHunManWag2009}, it is know that $\cL$ is a superclass of ptolemaic graphs, that is, gem-free chordal graphs \cite{How1981}, and even a superclass of directed rooted path graphs, introduced by Gavril \cite{Gav1974}.

Recently, we introduced the clique arrangement in \cite{NevRos2013}, a new data structure that is especially valuable for the analysis of strongly chordal graphs.
The clique arrangement $\cA(G)=(\cX,\cE)$ of a chordal graph $G$ is a directed acyclic graph that has certain vertex subsets of $G$ as a node set and describes the mutual inclusion of these sets by arcs.
In particular, for every set $C_1, C_2, \ldots$ of maximal cliques of $G$ there is a node in $\cX$ for $X = C_1 \cap C_2 \cap \ldots$ and two nodes $X,Z \in \cX$ are joined by an arc $XZ \in \cE$, if $X \subset Z$ and there is no $Y \in \cX$ with $X \subset Y \subset Z$.
In \cite {NevRos2013}, we give a new characterization of strongly chordal graphs in terms of a forbidden cyclic substructure in the clique arrangement, called bad $k$-cycles for $k\ge 3$, and we show how to construct the clique arrangement of a strongly chordal graph in nearly linear time.

It is known that the clique arrangements of ptolemaic graphs are even directed trees \cite{UehUno2005}.
Since all ptolemaic graphs are leaf powers and all leaf powers are strongly chordal, it appears likely that the degree of acyclicity in clique arrangements of leaf powers is between forbidden bad $k$-cycles, $k \ge 3$, and the complete absence of cycles.

This paper describes a cyclic substructure that is forbidden in the clique arrangement of leaf powers.
For convenience, we call these substructures bad $2$-cycles, although they are not the obvious continuation of the concept of bad $k$-cycles for $k \ge 3$.
As the main result of this paper, we show that bad $2$-cycles occur in $\cA(G)$, if and only if $G$ contains at least one of seven induced subgraphs $G_1, \ldots, G_7$ depicted in Figure \ref{fig:counterexamples}.

We leave it as an open question, if these seven graphs are sufficient to characterize $\cL$ in terms of forbidden subgraphs.
However, we conjecture that this is the case.
This would imply a polynomial time recognition algorithm for $\cL$, by using the possibility of efficiently recognizing strongly chordal graphs and checking the containment of a finite number of forbidden induced subgraphs.

\section{Preliminaries}

We refer to several graph classes which are not explicitly defined due to space limitations.
For a comprehensive survey on graph classes we would like to refer to \cite{BraLeSpi1999}.

Throughout this paper, all graphs $G=(V,E)$ are simple, without loops and, with the exception of clique arrangements, undirected.
We usually denote the vertex set by $V$ and the edge set by $E$, where the edges are also called arcs in a directed graph.
We write $x \ad y$, respectively $x \dad y$ in the directed case, for $xy \in E$ and $x \nad  y$ for $xy \not\in E$.
For all vertices $x \in V$ in an undirected graph, we let $N(x)=\{y\ |\ xy \in E\}$ denote the \emph{open neighborhood} and $N[x] = N(x) \cup \{x\}$ the \emph{closed neighborhood} of $x$ in $G$.
In a directed graph, $N_o(x)=\{y\ |\ xy \in E\}$ denotes the set of neighbors that are reachable from $x$ by a single arc and $N_i(x)=\{y\ |\ yx \in E\}$ are the neighbors that reach $x$ by a single arc.
If $|N_i(x)| = 0$ then $x$ is a \emph{source} and if $|N_o(x)| = 0$ then $x$ is a \emph{sink}.

An \emph{independent set} in $G$ is a set of mutually nonadjacent vertices.
A \emph{clique} $C \subseteq V$ is a set of mutually adjacent vertices and $C$ is called maximal, if there is no clique $C'$ with $C \subset C'$.
The set of all maximal cliques of $G$ is denoted by $\cC(G)$.


A (simple) path in a graph $G$ is a sequence $x_1, x_2 \ldots, x_k$ of non-repeating vertices in $G$, such that $x_ix_{i+1} \in E$ for all $i \in \{1, \ldots, k-1\}$.
If $E$ is clear from the context, then we denote the path by $x_1 \ad x_2 \ad \ldots \ad x_k$ in an undirected graph.
In a directed graph, $x_1 \dad x_2 \dad \ldots \dad x_k$ specifies a directed path and we say that $x_1$ \emph{reaches} $x_k$.
The distance $d_G(x,y)$ between two vertices $x,y$ of an (un-) directed graph $G$ is the minimum number of edges in an (un-) directed path starting in $x$ and ending in $y$.
If the edge $x_kx_1$ is additionally present in $E$, then we talk of a (simple) cycle in $G$, and as for paths, an undirected cycle is denoted by $x_1 \ad x_2 \ad \ldots \ad x_k \ad x_1$.
An undirected cycle is called \emph{induced $k$-cycle $C_k$}, if $G$ contains $x_ix_j$, if and only if $j = i+1$ or $i=k$ and $j=1$.

A tree $T$ is an undirected connected acyclic graph, that is, for all pairs $x,y$ of vertices there exists a path $x \ad \ldots \ad y$, and $T$ is free of cycles.
Directed graphs are acyclic, if they are free of directed cycles.

A vertex subset $U = \{x_0, \ldots, x_{k-1}, y_0, \ldots, y_{k-1}\} \subseteq V$ induces a \emph{$k$-sun} in $G$, if $X=\{x_0, \ldots, x_{k-1}\}$ is a clique and $Y=\{y_0, \ldots, y_{k-1}\}$ is an independent set and for every edge $x_iy_j$ between $X$ and $Y$, either $i=j$ or $i+1=j$, where the indices are counted modulo $k$.
By definition, a graph is \emph{chordal}, if and only if it does not contain induced $k$-cycles for all $k \geq 4$, and by Farber \cite{Far1983} a graph is \emph{strongly chordal}, if and only if it does not contain induced $k$-suns for all $k \geq 3$.

Beside the many useful properties of (strongly) chordal graphs, see for example \cite{BraLeSpi1999}, this paper uses in particular the following two properties, that are folklore but nevertheless have been shown in \cite{NevRos2013}:
\begin{lemma}\label{lma_vertex_node_none_adjacency}
If $G$ is a chordal graph and $C_1,C_2$ are maximal cliques of $G$, then there is a vertex $x \in C_1\setminus C_2$ such that $x \nad y$ for all $y \in C_2\setminus C_1$.
\end{lemma}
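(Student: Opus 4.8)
The plan is to invoke the classical \emph{clique tree} characterization of chordal graphs: $G$ admits a tree $T$ whose node set is $\cC(G)$ and in which, for every vertex $v$ of $G$, the set $T_v$ of nodes whose clique contains $v$ induces a connected subtree of $T$. We may assume $C_1 \neq C_2$, since otherwise $C_1 \setminus C_2 = \emptyset$ and there is nothing to show. Fix such a clique tree $T$, and let $C_1 = D_0, D_1, \ldots, D_m = C_2$ (with $m \geq 1$) be the unique path between $C_1$ and $C_2$ in $T$.

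First I would pin down a candidate vertex $x$. Deleting the edge $D_0 D_1$ splits $T$ into two subtrees $T^-$ (containing $D_0$) and $T^+$ (containing $D_1, \ldots, D_m$). Because $D_0$ and $D_1$ are distinct maximal cliques, neither contains the other, so $D_0 \cap D_1 \subsetneq C_1$. Moreover, every vertex $v \in C_1 \cap C_2$ lies in both $D_0$ and $D_m$, hence the subtree $T_v$ contains the entire $D_0$--$D_m$ path and in particular $v \in D_1$; thus $C_1 \cap C_2 \subseteq D_0 \cap D_1$. Consequently $C_1 \setminus (D_0 \cap D_1)$ is a nonempty subset of $C_1 \setminus C_2$, and I pick $x$ from it; note that $x \in C_1 \setminus C_2$ and $x \notin D_1$.

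The heart of the argument is then a short contradiction. Suppose there were a vertex $y \in C_2 \setminus C_1$ with $x \ad y$, and let $C$ be a maximal clique containing both $x$ and $y$, so that $C \in T_x \cap T_y$. The subtree $T_x$ contains $D_0$ but not $D_1$, hence it does not contain the edge $D_0 D_1$; being connected and meeting $T^-$, it therefore lies entirely inside $T^-$, whence $C \in T^-$. Symmetrically, $T_y$ contains $D_m$ but not $D_0 = C_1$ (since $y \notin C_1$), so it avoids the edge $D_0 D_1$ as well and lies entirely inside $T^+$, whence $C \in T^+$. But $T^-$ and $T^+$ are disjoint, a contradiction. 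Hence no such $y$ exists, that is, $x \nad y$ for all $y \in C_2 \setminus C_1$.

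The one point that needs care is the double demand on $x$: it must sit in $C_1 \setminus C_2$, as required by the statement, yet also be ``far enough'' from $C_2$ that $T_x$ cannot cross the separating edge $D_0 D_1$ to the $C_2$-side; taking $x$ outside the minimal separator $D_0 \cap D_1$ is precisely what reconciles the two, and establishing $C_1 \cap C_2 \subseteq D_0 \cap D_1$ via the subtree property is the small lemma that makes this work. An alternative route avoiding clique trees is induction on $|V|$ via a simplicial vertex, splitting into the cases where it belongs to $C_1$, to $C_2$, or to neither, but the case analysis there is noticeably more tedious.
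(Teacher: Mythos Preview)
Your proof is correct. The clique-tree argument is clean: choosing $x \in C_1 \setminus (D_0 \cap D_1)$ simultaneously guarantees $x \in C_1 \setminus C_2$ (via $C_1 \cap C_2 \subseteq D_0 \cap D_1$) and forces the subtree $T_x$ to lie on the $T^-$ side of the separating edge, while any $y \in C_2 \setminus C_1$ has $T_y$ on the $T^+$ side, ruling out a common maximal clique and hence an edge $xy$.

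As for comparison with the paper: there is nothing to compare against. The paper does not prove this lemma; it labels the statement ``folklore'' and defers the proof to \cite{NevRos2013}. Your clique-tree route is the standard one for such statements and is likely close in spirit to what is done there. One small quibble: the sentence ``since otherwise $C_1 \setminus C_2 = \emptyset$ and there is nothing to show'' is not quite right, because if $C_1 = C_2$ the existential conclusion actually fails; the lemma tacitly assumes $C_1 \neq C_2$, so it would be cleaner to simply say so.
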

\begin{lemma}\label{lma_stronglyChordal_cliqueCutStructure}
If $G$ is a strongly chordal graph and $\cC$ any nonempty subset of $\cC(G)$, then there are two maximal cliques $C_1, C_2 \in \cC$ such that $\bigcap_{C \in \cC} C = C_1 \cap C_2$.
\end{lemma}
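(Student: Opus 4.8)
The plan is to induct on $|V|$, the key being that a strongly chordal graph always has a \emph{simple} vertex $v$, i.e.\ one for which the closed neighborhoods $N[u]$ with $u \in N[v]$ form a chain under inclusion \cite{Far1983}; such a $v$ is simplicial, so $N[v]$ is the unique maximal clique of $G$ containing $v$. The base case $|\cC| \le 2$ is trivial, so assume $|\cC| \ge 3$ and fix a simple vertex $v$.

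If $v$ lies in no clique of $\cC$, then every $C \in \cC$ is still a maximal clique of the strongly chordal graph $G - v$ (a clique of $G$ stays maximal after deleting a vertex it does not contain), and since $\bigcap_{C \in \cC} C$ is unchanged, the induction hypothesis for $G - v$ and $\cC$ finishes this case. Otherwise $v$ lies in some $C \in \cC$, and then $C = N[v]$ since $N[v]$ is the only maximal clique through $v$; in particular $N[v] \in \cC$. Put $\cC' = \cC \setminus \{ N[v] \}$; it is nonempty, and every clique in $\cC'$ differs from $N[v]$, hence avoids $v$, hence is a maximal clique of $G - v$. By the induction hypothesis there are $C_1, C_2 \in \cC'$ with $\bigcap_{C \in \cC'} C = C_1 \cap C_2$, so $\bigcap_{C \in \cC} C = N[v] \cap C_1 \cap C_2$.

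It thus remains to show, for three maximal cliques one of which is the simplicial clique $N[v]$, that $N[v] \cap C_1 \subseteq C_2$ or $N[v] \cap C_2 \subseteq C_1$: either inclusion yields $\bigcap_{C \in \cC} C \in \{ N[v] \cap C_1,\ N[v] \cap C_2 \}$, an intersection of two distinct maximal cliques in $\cC$, which completes the induction. Suppose not, and choose $a \in (N[v] \cap C_1) \setminus C_2$ and $b \in (N[v] \cap C_2) \setminus C_1$. Since $a, b \in N[v]$ and $v$ is simple, $N[a]$ and $N[b]$ are comparable; assume $N[a] \subseteq N[b]$ (the case $N[b] \subseteq N[a]$ being symmetric under exchanging $(a, C_1)$ with $(b, C_2)$). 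As $a \in C_1$ and $C_1$ is a clique, $C_1 \subseteq N[a] \subseteq N[b]$; but $b \notin C_1$, so every vertex of $C_1$ is adjacent to $b$, making $C_1 \cup \{ b \}$ a clique that properly contains $C_1$ --- contradicting the maximality of $C_1$.

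I expect this last step to be the crux, and it is the only place where strong chordality enters beyond heredity: the whole argument collapses to the pairwise comparability of $N[a]$ and $N[b]$ inside a simplicial clique, which is precisely the property that fails for suns --- indeed one checks directly that the claim is false for a $3$-sun, whose three ``outer'' maximal cliques pairwise meet in distinct single vertices but have empty common intersection. Everything else is routine bookkeeping about which cliques remain maximal under vertex deletion.
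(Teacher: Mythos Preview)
Your argument is correct. The paper itself does not give a proof of this lemma; it is quoted as folklore with a pointer to \cite{NevRos2013}, so there is no in-paper argument to compare against. Your induction on $|V|$ via a simple vertex is the natural route: deletion of a simplicial $v$ preserves maximality of the cliques not containing it and preserves strong chordality, so the only nontrivial step is the three-clique reduction showing $N[v]\cap C_1\cap C_2\in\{N[v]\cap C_1,\ N[v]\cap C_2\}$. Your use of the chain condition on $\{N[u]:u\in N[v]\}$ to force $C_1\subseteq N[a]\subseteq N[b]$ (and hence $C_1\cup\{b\}$ a clique) is exactly the right lever, and your remark that the $3$-sun witnesses failure without strong chordality is apt.
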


A strongly chordal graph $G=(V,E)$ is the $k$-leaf power of a tree $T$ for $k \geq 2$, if $V$ is the set of leaves in $T$ and for all $x,y \in V$ there exists $xy \in E$, if and only if $d_T(x,y) \leq k$.
The tree $T$ is called a $k$-leaf root of $G$, in this case.
Notice that $k$-leaf roots are not necessarily unique for given $k$-leaf powers.
For all $k \geq 2$, the class $\cL_k$ consists of all graphs that are a $k$-leaf power for some tree and $\cL = \bigcup_{k=2}^\infty \cL_k$ is the class of leaf powers.

The clique arrangement $\cA(G) = (\cX,\cE)$ of a chordal graph $G$, as introduced in \cite{NevRos2013}, is a directed acyclic graph with node set 
\[\cX = \left\{X\ \left|\ X = \bigcap_{C \in \cC} C \text{ with } \cC \subseteq \cC(G) \text{ and } X \not= \emptyset \right.\right\},\]
that contains exactly all intersections of the maximal cliques of $G$, and arc set
\[\cE = \left\{XZ\ \left|\ X,Z \in \cX \text{ with } X \subset Z \text{ and } \nexists Y \in \cX: X \subset Y \subset Z \right.\right\}\]
that describes their mutual inclusion.
Clearly, the set of sinks in $\cA(G)$ corresponds exactly to $\cC(G)$.

The following simple facts for clique arrangements are also introduced in \cite{NevRos2013}:
\begin{lemma}[Nevries and Rosenke \cite{NevRos2013}]\label{obs_sink_intersection}
If $X \in \cX$ is a node in the clique arrangement $\cA(G)=(\cX,\cE)$ of a chordal graph $G$ and if $\{Y_1, \ldots, Y_\ell\} = N_o(X)$, then $X = Y_1 \cap \ldots \cap Y_\ell$.
Moreover, if $C_1, \ldots, C_k$ are the sinks of $\cA(G)$ that are reached from $X$ by directed paths, then $X = C_1 \cap \ldots \cap C_k$.
\end{lemma}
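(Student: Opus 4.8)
The plan is to reduce everything to one elementary fact: a node $X\in\cX$ is already the intersection of \emph{all} maximal cliques that contain it. Writing $\cC_X=\{C\in\cC(G)\mid X\subseteq C\}$, I first note $X=\bigcap_{C\in\cC_X}C$; indeed, if $X=\bigcap_{C\in\cC}C$ is any representation of $X$ as in the definition of $\cX$, then $\cC\subseteq\cC_X$, so $X\supseteq\bigcap_{C\in\cC_X}C\supseteq X$. I will also use the fact already recorded above that the sinks of $\cA(G)$ are exactly the maximal cliques of $G$; in particular $N_o(X)=\emptyset$ precisely when $X\in\cC(G)$, and then the ``moreover'' statement is trivial with $C_1=X$. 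So I may assume $X\notin\cC(G)$, i.e. $\ell\geq 1$, and then $X\subsetneq C$ for every $C\in\cC_X$ (equality would make $X$ a maximal clique, hence a sink).

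The key technical step, used for both claims, is: \emph{for every $C\in\cC_X$ there is an index $i$ with $Y_i\subseteq C$.} Since $X\subsetneq C$ and $\cX$ is finite, choose a node $Z$ that is inclusion-minimal among all nodes $Z'$ with $X\subsetneq Z'\subseteq C$. Minimality rules out any node strictly between $X$ and $Z$, so $XZ$ is an arc of $\cA(G)$ and thus $Z=Y_i$ for some $i$, with $Y_i=Z\subseteq C$. Now the first assertion follows at once: $X\subseteq Y_1\cap\ldots\cap Y_\ell$ because each $XY_i$ is an arc, and conversely, by the key step, $Y_1\cap\ldots\cap Y_\ell\subseteq C$ for every $C\in\cC_X$, hence $Y_1\cap\ldots\cap Y_\ell\subseteq\bigcap_{C\in\cC_X}C=X$.

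For the ``moreover'' part I would show that the set of sinks reachable from $X$ along directed paths is exactly $\cC_X$, which together with $X=\bigcap_{C\in\cC_X}C$ finishes the proof. One inclusion is immediate: a directed path $X=Z_0\to Z_1\to\ldots\to Z_m=C$ gives $X\subseteq C$ (arcs only enlarge the set), and a sink $C$ is a maximal clique, so $C\in\cC_X$. For the other inclusion, given $C\in\cC_X$ with $X\subsetneq C$, I build a directed path from $X$ to $C$ by induction on $|C|-|X|$: the step above produces an index $i$ with $X\subsetneq Y_i\subseteq C$; if $Y_i=C$ we are done, otherwise $|C|-|Y_i|<|C|-|X|$ and induction applied to $Y_i$ yields a path $Y_i\to\ldots\to C$, which we prepend with the arc $XY_i$. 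Alternatively, the ``moreover'' part follows from the first assertion by induction on the length of the longest directed path leaving $X$, using that every directed path from $X$ to a sink starts with some arc $XY_i$.

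I do not expect any real obstacle here: the only points requiring care are the degenerate case in which $X$ is itself a sink, and the observation that a covering arc $XZ\in\cE$ exists whenever $X$ lies strictly below some other node of $\cX$ --- both handled by the finiteness of $\cX$ and the definition of $\cE$ as the covering relation of set inclusion. If anything, choosing a clean induction parameter for the reachability argument is the most fiddly part.
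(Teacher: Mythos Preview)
Your proof is correct and self-contained. Note, however, that the paper does not actually prove this lemma: it is quoted from \cite{NevRos2013} and stated without proof, so there is no ``paper's own proof'' to compare against. Your argument stands on its own: the reduction to $X=\bigcap_{C\in\cC_X}C$ is the natural starting point, the covering-arc step (picking an inclusion-minimal $Z$ with $X\subsetneq Z\subseteq C$) is exactly how one produces an out-neighbor below a given maximal clique, and the induction on $|C|-|X|$ cleanly establishes that the sinks reachable from $X$ are precisely the maximal cliques containing $X$. The degenerate sink case is handled correctly.
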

\begin{lemma}[Nevries and Rosenke \cite{NevRos2013}]\label{obs_node_intersection}
If $Y_1, \ldots, Y_k \in \cX$ are nodes in the clique arrangement $\cA(G)=(\cX,\cE)$ of a chordal graph $G$ such that their intersection $X = Y_1 \cap \ldots \cap Y_k$ is not empty, then $X \in \cX$.
\end{lemma}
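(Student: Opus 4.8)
The plan is to unfold the definition of the node set $\cX$ and to observe that $X$ already has the required form. Since each $Y_i$ is a node of $\cA(G)$, the definition of $\cX$ provides a family $\cC_i \subseteq \cC(G)$ of maximal cliques with $Y_i = \bigcap_{C \in \cC_i} C$; moreover this family may be taken nonempty, because every node of the finite acyclic digraph $\cA(G)$ lies on a directed path to some sink, and the sinks of $\cA(G)$ are precisely the maximal cliques of $G$. First I would put $\cC = \cC_1 \cup \cdots \cup \cC_k$, which is again a nonempty subset of $\cC(G)$. A routine manipulation of intersections then yields
\[
\bigcap_{C \in \cC} C \;=\; \bigcap_{i=1}^{k}\ \bigcap_{C \in \cC_i} C \;=\; \bigcap_{i=1}^{k} Y_i \;=\; X ,
\]
so $X$ is an intersection of maximal cliques of $G$, and since $X \not= \emptyset$ holds by hypothesis, $X$ meets exactly the membership condition for $\cX$. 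Hence $X \in \cX$.

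There is no genuine obstacle here: the statement merely says that $\cX$ is closed under finite intersection whenever the result stays nonempty, which is immediate from the fact that a finite intersection of finite intersections of maximal cliques is itself a finite intersection of maximal cliques. The only point worth a sentence is the nonemptiness of the defining families $\cC_i$, which I would justify as above, or, equivalently, by invoking Lemma~\ref{obs_sink_intersection} to write each $Y_i$ as the intersection of the sinks reachable from it and then proceeding with the same chain of equalities. Accordingly, I expect the entire write-up to amount to that displayed chain together with the one remark on nonemptiness.
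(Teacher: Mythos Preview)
Your argument is correct: the node set $\cX$ is, by its very definition, the set of nonempty intersections of families of maximal cliques, and a finite intersection of such intersections is again of this form via the union of the defining families, so the lemma is immediate once one unfolds the definition. The paper does not actually prove this lemma; it is imported verbatim from \cite{NevRos2013}, and your write-up is precisely the routine verification one would expect there. Your side remark on the nonemptiness of the families $\cC_i$ is a reasonable precaution (depending on whether one reads the empty intersection as $V$), and invoking Lemma~\ref{obs_sink_intersection} for it is a clean way to dispatch that edge case.
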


Although $\cA(G)$ is acyclic by definition, we call the following structure a cycle in $\cA(G)$ for the lack of a better term. 
For any $k \in \N$, a \emph{$k$-cycle} of $\cA(G)$ is a set of nodes $S_0, \ldots, S_{k-1}, T_0, \ldots, T_{k-1}$ such that for all $i \in \{0, \ldots, k-1\}$ there is a directed path from $S_i$ to $T_i$ and a directed path from $S_i$ to $T_{i-1}$ (counted modulo $k$).
The nodes $S_0, \ldots, S_{k-1}$ are called \emph{starters} of the cycle and the nodes $T_0, \ldots, T_{k-1}$ are called \emph{terminals} of the cycle.
Note that by definition, $S_i\subseteq T_i \cap T_{i-1}$ for all $i \in \{0, \ldots, k-1\}$.
In \cite{NevRos2013}, we call a $k$-cycle \emph{bad}, if $k \geq 3$ and for all $i,j \in \{0, \ldots, k-1\}$ there is a directed path from $S_i$ to $T_j$, if only if $j \in \{i, i-1\}$ (counted modulo $k$).
\begin{theorem}[Nevries and Rosenke \cite{NevRos2013}]\label{thm_badcycles}
Let $G=(V,E)$ be a chordal graph and $\cA(G)=(\cX,\cE)$ be the clique arrangement of $G$.
Then $G$ is strongly chordal, if and only if $\cA(G)$ is free of bad $k$-cycles for all $k \geq 3$.
\end{theorem}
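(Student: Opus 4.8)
The plan is to derive the claim from Farber's classical characterization \cite{Far1983} --- a chordal graph is strongly chordal if and only if it has no induced $k$-sun for any $k \geq 3$ --- by translating between induced suns of $G$ and bad cycles of $\cA(G)$. Note first that, since $\cA(G)$ is by definition the Hasse diagram of the inclusion order on $\cX$, for $X,Z \in \cX$ there is a directed path from $X$ to $Z$ if and only if $X \subseteq Z$. Hence it suffices to prove: $G$ contains an induced $k$-sun for some $k \geq 3$ if and only if $\cA(G)$ contains a bad $\ell$-cycle for some $\ell \geq 3$.

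\emph{From suns to bad cycles.} Let $\{x_0, \ldots, x_{k-1}, y_0, \ldots, y_{k-1}\}$ induce a $k$-sun with clique $X$, independent set $Y$, and $x_i \ad y_j$ exactly when $j \in \{i, i+1\}$ (indices modulo $k$). For each $i$ the set $\{x_{i-1}, x_i, y_i\}$ is a clique; extend it to a maximal clique $C_i$ and put $T_i := C_i$, $S_i := C_{i-1} \cap C_i$. Since $x_{i-1} \in C_{i-1} \cap C_i$, the set $S_i$ is a nonempty intersection of nodes, hence a node by Lemma~\ref{obs_node_intersection}; the $C_i$ are pairwise distinct because $y_i \in C_i$ while $y_i \nad y_j$ for $i \neq j$; and since distinct maximal cliques are incomparable, $S_i \subsetneq T_i$ and $S_i \subsetneq T_{i-1}$, so there are directed paths $S_i \to T_i$ and $S_i \to T_{i-1}$. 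It remains to verify that there is no directed path $S_i \to T_j$ for $j \notin \{i, i-1\}$, i.e. $S_i \not\subseteq C_j$: indeed $x_{i-1} \in S_i$, while $x_{i-1} \in C_j$ would force $x_{i-1} \ad y_j$ (as $y_j \in C_j$), which by the sun adjacency pattern forces $j \in \{i-1, i\}$. Thus $S_0, \ldots, S_{k-1}, T_0, \ldots, T_{k-1}$ is a bad $k$-cycle.

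\emph{From bad cycles to suns.} Start with a bad $\ell$-cycle $S_0, \ldots, S_{\ell-1}, T_0, \ldots, T_{\ell-1}$. First normalize it: $T_i \cap T_{i-1}$ is a node (it contains the nonempty $S_i$), and replacing $S_i$ by $T_i \cap T_{i-1}$ can only enlarge the starters, which cannot create a new containment $S_i \subseteq T_j$; so we may assume $S_i = T_i \cap T_{i-1}$, and then $S_{i+1} \not\subseteq T_{i-1}$ already forces $T_i \not\subseteq T_{i-1}$, so all these inclusions are proper. Now pick a vertex $x_i \in S_{i+1} = T_i \cap T_{i+1}$ for each $i$; since $x_i$ and $x_{i+1}$ both lie in the clique $T_{i+1}$, consecutive $x_i$ are adjacent. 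The idea is to convert the non-containments $S_i \not\subseteq T_j$ into a stable set $y_0, \ldots, y_{\ell-1}$ --- applying Lemma~\ref{lma_vertex_node_none_adjacency} to suitably chosen maximal cliques containing the $T_i$ to obtain the required non-adjacencies --- so that $y_i$ is adjacent precisely to the two ``local'' clique vertices and to no other $x_j$, and the $y_i$ are pairwise non-adjacent; this would exhibit a sun. The delicate point is that a bad $\ell$-cycle does not obviously produce an $\ell$-sun: the vertices $x_0, \ldots, x_{\ell-1}$ a priori form only a cyclic adjacency sequence, and chordality forces chords (or coincidences) among them. One therefore passes to a bad cycle of minimum length and argues that its starters and terminals are in sufficiently ``general position'' --- so that the $x_i$ really do form a clique and the $y_i$ a stable set with exactly the sun pattern --- or, alternatively, runs an induction on $|V|$: if some vertex is redundant for maintaining a bad cycle in $\cA(G)$, delete it and invoke heredity of strong chordality, and otherwise analyze the fully reduced configuration directly.

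The sun-to-cycle direction and the appeal to Farber's theorem are comparatively routine; the real work --- and the main obstacle --- lies in the cycle-to-sun direction, namely isolating the correct normal form for bad cycles (or the correct reduction step for the induction) that makes the extracted $k$-sun genuine: an actual clique, an actual stable set, and the exact bipartite adjacency pattern, with the value of $k$ pinned down along the way.
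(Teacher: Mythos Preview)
This theorem is not proved in the present paper: it is quoted from \cite{NevRos2013} and used as a black box. So there is no ``paper's own proof'' to compare your attempt against.

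Evaluating your proposal on its own merits: the sun-to-bad-cycle direction is fine. Your choice $T_i = C_i \supseteq \{x_{i-1},x_i,y_i\}$ and $S_i = C_{i-1}\cap C_i \ni x_{i-1}$, together with the observation that $x_{i-1}\in C_j$ would force the forbidden edge $x_{i-1}y_j$, correctly yields a bad $k$-cycle.

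The converse direction, however, is not a proof but an outline with an explicitly acknowledged gap. You normalize to $S_i = T_{i-1}\cap T_i$ and pick $x_i \in S_{i+1}$, but then you only \emph{describe} what one would like to do --- extract a stable set $y_0,\ldots,y_{\ell-1}$ with the exact sun adjacency pattern via Lemma~\ref{lma_vertex_node_none_adjacency}, or pass to a minimal bad cycle, or run an induction on $|V|$ --- without carrying out any of these. The obstacles you name are real: the $x_i$ need not be distinct and need not form a clique, a single application of Lemma~\ref{lma_vertex_node_none_adjacency} controls non-adjacency only into one clique at a time, and there is no argument given that a minimum-length bad cycle forces the required ``general position''. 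As written, the cycle-to-sun implication is missing its main content; until one of the suggested strategies is actually executed, this direction remains open in your write-up.
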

In this paper we apply two other properties of clique arrangements for strongly chordal graphs:
\begin{lemma}[Proof in Section \ref{sec:techproofs}]\label{lma_node_to_sink_intersection}
Let $G$ be a strongly chordal graph with clique arrangement $\cA(G)=(\cX,\cE)$ and let $X,Y,Z \in \cX$ be three distinct nodes such that $X = Y \cap Z$.
There are sinks $C_1,C_2 \in \cX$ such that $C_1$ is reachable from $Y$ and $C_2$ is reachable from $Z$ and $X = C_1 \cap C_2$.
\end{lemma}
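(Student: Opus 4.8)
The plan is to use Lemma \ref{lma_stronglyChordal_cliqueCutStructure} applied to a carefully chosen family of maximal cliques, together with the interpretation of nodes as intersections of sinks given by Lemma \ref{obs_sink_intersection}. First I would fix the sets $\cC_Y = \{C \in \cC(G) \mid C \text{ is a sink reachable from } Y\}$ and $\cC_Z = \{C \in \cC(G) \mid C \text{ is a sink reachable from } Z\}$; by Lemma \ref{obs_sink_intersection} we have $Y = \bigcap_{C \in \cC_Y} C$ and $Z = \bigcap_{C \in \cC_Z} C$, hence $X = Y \cap Z = \bigcap_{C \in \cC_Y \cup \cC_Z} C$. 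Now apply Lemma \ref{lma_stronglyChordal_cliqueCutStructure} to the nonempty family $\cC = \cC_Y \cup \cC_Z$: it yields two maximal cliques $C_1, C_2 \in \cC$ with $X = C_1 \cap C_2$. These are sinks of $\cA(G)$, so it only remains to place one of them under $Y$ and the other under $Z$.

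The main obstacle is exactly this last point: Lemma \ref{lma_stronglyChordal_cliqueCutStructure} could a priori return two cliques both lying in $\cC_Y$ (or both in $\cC_Z$), so we do not immediately get $C_1$ reachable from $Y$ and $C_2$ reachable from $Z$. To handle this I would argue as follows. Suppose, say, $C_1, C_2 \in \cC_Y$, so that $Y \subseteq C_1 \cap C_2 = X$. Combined with $X = Y \cap Z \subseteq Y$ this forces $X = Y$, contradicting the hypothesis that $X,Y,Z$ are distinct. Symmetrically $C_1, C_2 \in \cC_Z$ forces $X = Z$, again a contradiction. Hence one of $C_1, C_2$ lies in $\cC_Y \setminus \cC_Z$ and the other in $\cC_Z$ (or vice versa); after possibly swapping names we get $C_1 \in \cC_Y$, i.e.\ $C_1$ reachable from $Y$, and $C_2 \in \cC_Z$, i.e.\ $C_2$ reachable from $Z$, with $X = C_1 \cap C_2$ as required.

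One technical subtlety to check carefully is whether $C_1$ or $C_2$ could itself equal $X$, $Y$, or $Z$, and whether the ``reachable'' relation in Lemma \ref{obs_sink_intersection} is meant to include the node itself when it is already a sink. If $Y$ is a sink then $\cC_Y = \{Y\}$ and the directed path from $Y$ to the sink $Y$ is the trivial path; the argument above still goes through since we only used $Y \subseteq C$ for each $C \in \cC_Y$, which holds trivially. Similarly if $C_1 = X$, then from $X = C_1 \cap C_2$ we get $X \subseteq C_2$ and $X = C_1$ is a sink; but $X$ is reachable from both $Y$ and $Z$ since $X \subset Y$ and $X \subset Z$ would be contradicted — rather, $X$ being a node strictly below both $Y$ and $Z$ means $X$ cannot be a sink reached from $Y$ unless $X = Y$, excluded by distinctness. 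I would phrase the final writeup so that all of these degenerate alignments are ruled out precisely by the hypothesis that $X$, $Y$, and $Z$ are pairwise distinct.
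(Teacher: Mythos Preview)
Your main argument is correct and substantially simpler than the paper's. The paper never applies Lemma~\ref{lma_stronglyChordal_cliqueCutStructure} to the union $\cC_Y\cup\cC_Z$; instead it applies the lemma separately to $\cC_Y$ and to $\cC_Z$ to write $Y=A\cap B$ and $Z=C\cap D$, and then runs a lengthy case analysis (depending on whether $Y$ or $Z$ is a sink, whether $Y$ reaches $C$ or $D$, etc.), repeatedly invoking Theorem~\ref{thm_badcycles} to rule out bad $3$- and $4$-cycles among the candidate sinks $A,B,C,D$. Your single application of Lemma~\ref{lma_stronglyChordal_cliqueCutStructure} to the combined family bypasses all of this: once you have $C_1,C_2\in\cC_Y\cup\cC_Z$ with $C_1\cap C_2=X$, the observation that $C_1,C_2\in\cC_Y$ forces $Y\subseteq X$ (and symmetrically for $\cC_Z$) immediately pins one clique above $Y$ and the other above $Z$. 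What the paper's approach buys is nothing extra here; your route is strictly more economical.

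Your final paragraph, however, is muddled and should be dropped or rewritten. The worry that $C_1$ might equal $X$ cannot arise: every node of $\cX$ is a clique, $X\subsetneq Y$ by distinctness, so $X$ is not a maximal clique and hence not a sink. More importantly, your phrase ``$X$ being a node strictly below both $Y$ and $Z$ means $X$ cannot be a sink reached from $Y$'' has the direction of reachability backwards (arcs in $\cA(G)$ go from smaller to larger sets, so sinks reachable from $Y$ are supersets of $Y$, not subsets). None of this affects the core proof, which stands as written in your first two paragraphs; just clean up or delete the third.
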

\begin{lemma}[Proof in Section \ref{sec:techproofs}]\label{lma_cliqueArrangement_inducedSubgraphs}
Let $G=(V,E)$ be a chordal graph with clique arrangement $\cA(G)=(\cX,\cE)$ that occurs as an induced subgraph of a chordal graph $G'=(V', E')$ with clique arrangement $\cA(G')=(\cX',\cE')$, that is, $G= G'[V]$.
There exists a function $\phi: \cX \rightarrow \cX'$ that fulfills the following two conditions for all $X,Y \in \cX$:
\begin{enumerate}
\item $X = Y \Leftrightarrow \phi(X) = \phi(Y)$, and
\item $\cA(G)$ has a directed path from $X$ to $Y$, if and only if $\cA(G')$ has a directed path from $\phi(X)$ to $\phi(Y)$.
\end{enumerate}
\end{lemma}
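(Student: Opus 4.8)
The plan is to define $\phi$ directly from the clique structure. By Lemma \ref{obs_sink_intersection}, every node $X \in \cX$ equals $C_1 \cap \dots \cap C_k$, where $C_1, \dots, C_k$ are exactly the sinks of $\cA(G)$ reachable from $X$; these are maximal cliques of $G$. Each maximal clique $C$ of $G = G'[V]$ extends to at least one maximal clique of $G'$; fix a choice $\psi(C) \in \cC(G')$ with $C \subseteq \psi(C)$. Then set $\phi(X) = \psi(C_1) \cap \dots \cap \psi(C_k)$, which lies in $\cX'$ by Lemma \ref{obs_node_intersection} once we check it is nonempty (it contains $X \neq \emptyset$). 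The two conditions then have to be verified.

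For condition (1), the key observation is that $\phi(X) \cap V = X$: since $C_i = \psi(C_i) \cap V$ — because $C_i$ is a maximal clique of the induced subgraph and hence cannot be properly extended within $V$ — we get $\phi(X) \cap V = \bigcap_i (\psi(C_i) \cap V) = \bigcap_i C_i = X$. So $\phi(X) = \phi(Y)$ forces $X = \phi(X) \cap V = \phi(Y) \cap V = Y$, and the converse is immediate. Here I would need the small fact that a maximal clique of an induced subgraph is the trace on $V$ of some maximal clique of the supergraph, and that intersecting commutes with tracing on $V$; both are routine.

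For condition (2), I would argue through Lemma \ref{obs_sink_intersection} again: a directed path from $X$ to $Y$ in $\cA(G)$ exists if and only if $X \subseteq Y$, because the arcs of a clique arrangement encode exactly the covering relation of the inclusion order on $\cX$, so reachability is inclusion. Thus it suffices to show $X \subseteq Y \iff \phi(X) \subseteq \phi(Y)$. The direction $\phi(X) \subseteq \phi(Y) \Rightarrow X \subseteq Y$ follows by tracing on $V$ as above. For the forward direction, if $X \subseteq Y$ then every sink reachable from $Y$ is reachable from $X$ (again by the reachability-equals-inclusion principle, a clique containing $Y$ contains $X$), so the set of cliques $\{\psi(C) : C \text{ a sink reached from } Y\}$ is a subset of the corresponding set for $X$, whence $\phi(X) \subseteq \phi(Y)$.

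The main obstacle is the well-definedness and the behaviour of $\phi$ under the arbitrary choices in $\psi$: a priori $\phi(X)$ depends on the chosen extensions $\psi(C_i)$, and one must be sure that the identity $\phi(X) \cap V = X$ (which pins down $\phi(X)$ enough to make conditions (1) and (2) go through) survives any such choice — which it does, precisely because $\psi(C) \cap V = C$ regardless of which extension is picked. A secondary subtlety is confirming that reachability in a clique arrangement coincides with inclusion of nodes; this is essentially built into the definition of $\cE$ as the covering relation, but it should be stated cleanly before it is used twice.
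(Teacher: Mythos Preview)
Your proposal is correct and follows essentially the same construction as the paper: extend each maximal clique of $G$ to a chosen maximal clique of $G'$, define $\phi(X)$ as the intersection of the images of the sinks reachable from $X$, and verify both conditions through the trace identity $\phi(X)\cap V=X$ together with the fact that reachability in a clique arrangement coincides with set inclusion. Your handling of injectivity via the trace identity is in fact a touch cleaner than the paper's contradiction argument, but the overall route is the same.
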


\section{Forbidden Induced Subgraphs}

Bibelnieks \etal \cite{BibDea1993} are the first to find a strongly chordal graph, namely $G_7$, that is not in $\cL$ and, consequently, show that the classes are not equivalent.
In fact, they were looking for a strongly chordal graph that is not a fixed tolerance NeST graph, but by Brandstädt \etal \cite{BraHunManWag2009}, we know that $\cL$ and this class are equal.
Since then, it has been conjectured that $G_7$ is the smallest forbidden induced subgraph of leaf powers.

To show that $G_7$ is not in $\cL$, Bibelnieks \etal \cite{BibDea1993} use a lemma of Broin \etal \cite{BroLow1986}.
The basic idea of the proof of this lemma is to show for certain pairs of edges $x_1y_1$ and $x_2y_2$ in $G$ that the path between $x_1$ and $y_1$ is disjoint from the path between $x_2$ and $y_2$ in every leaf root of $G$.
In particular, this happens, if vertices $a,b$ exist in $G$ with $x_1,y_1 \in N(a) \setminus N[b]$ and $x_2,y_2 \in N(b) \setminus N[a]$. 
The graph $G_7$ has a cycle $x_0 \ad y_{00} \ad y_{10} \ad x_1 \ad y_{11} \ad y_{01} \ad x_0$, where the condition is fulfilled for many pairs of edges in the cycle.
It follows that every leaf root of $G_7$ would have a cycle, which is a contradiction.

In this section, we want to show that there are at least six other strongly chordal graphs $G_1, \ldots, G_6$ that are not in $\cL$.
Interestingly, every of these six graphs is smaller than $G_7$.
For our proof, we generalize the argument of Bibelnieks \etal \cite{BibDea1993} for pairs of edges $x_1y_1$ and $x_2y_2$ that correspond to disjoint paths in leaf roots.
The following Lemma provides three corresponding conditions:
\begin{lemma}\label{lma_criticalEdges}
Let $G=(V,E)$ be a $k$-leaf power of a tree $T$ for some $k \geq 2$ and let $x_1y_1$ and $x_2y_2$ be two edges of $G$ on distinct vertices $x_1,y_1,x_2,y_2 \in V$.
The paths $x_1 \ad \ldots \ad y_1$ and $x_2 \ad \ldots \ad y_2$ in $T$ are disjoint, that is, do not share any node, if at least one of the following conditions holds:
\begin{enumerate}
\item
At most one of the edges $x_1x_2,x_1y_2,y_1x_2,y_1y_2$ is in $E$.
\item
There is a vertex $a \in V$ such that $x_1,y_1 \in N(a)$, and $x_2,y_2 \not\in N[a]$, and $N(x_1) \cap \{x_2,y_2\} \leq 1$, and $N(y_1) \cap \{x_2,y_2\} \leq 1$.
\item
There are distinct vertices $a,b \in V$ such that $x_1,y_1 \in N(a)\setminus N[b]$, and $x_2,y_2 \in N(b)\setminus N[a]$.
\end{enumerate}
\end{lemma}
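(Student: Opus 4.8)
The plan is to work inside a fixed $k$-leaf root $T$ of $G$ and, for each of the three conditions, locate a node of $T$ that every internal path is forced to avoid, thereby separating the two edge-paths. Throughout I write $P_1 = x_1 \ad \ldots \ad y_1$ and $P_2 = x_2 \ad \ldots \ad y_2$ for the (unique) paths in $T$; since $T$ is a tree, $P_1$ and $P_2$ intersect if and only if they share at least one node, and if they share any node they share a subpath (possibly a single node), whose endpoints are on both paths. The key quantitative tool is the defining property: $uv \in E$ iff $d_T(u,v) \le k$, together with the fact that in a tree $d_T(u,v) = d_T(u,w) + d_T(w,v)$ whenever $w$ lies on the $u$-$v$ path.

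\emph{Condition 1.} Suppose for contradiction that $P_1$ and $P_2$ share a common subpath with endpoints $p$ and $q$ (allowing $p=q$). Orient $P_1$ from $x_1$ to $y_1$ and $P_2$ from $x_2$ to $y_2$ so that $p$ precedes $q$ on $P_1$; on $P_2$ the shared subpath is traversed either in the same or the opposite direction. In the first case $d_T(x_1,y_2) = d_T(x_1,p) + d_T(p,q) + d_T(q,y_2) \le d_T(x_1,q) + d_T(p,y_2)$... more cleanly: $d_T(x_1,y_2) + d_T(x_2,y_1) = d_T(x_1,y_1) + d_T(x_2,y_2) \le 2k$ (a telescoping identity on the shared subpath), so at least one of $x_1y_2$, $x_2y_1$ is an edge; similarly, orienting the other way, at least one of $x_1x_2$, $y_1y_2$ is an edge. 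That already gives \emph{two} edges among the four, contradicting the hypothesis that at most one is present. I expect the bookkeeping of which pair of identities to use in which orientation to be the fiddly part here, but it is purely mechanical once the telescoping identity is written down.

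\emph{Condition 2.} Here I would argue that the hypothesis forces Condition 1, or directly forces disjointness. Since $x_1, y_1 \in N(a)$ we have $d_T(a,x_1), d_T(a,y_1) \le k$, and since $x_2, y_2 \notin N[a]$ we have $d_T(a,x_2), d_T(a,y_2) > k$. Consider the node $m$ on $P_1$ closest to $a$ (the ``projection'' of $a$ onto $P_1$); if $P_1$ and $P_2$ met, I would push $a$'s projection onto $P_2$ and use the distance bounds to show one of $x_1x_2, x_1y_2, y_1x_2, y_1y_2$ is forced, while the two constraints $|N(x_1)\cap\{x_2,y_2\}| \le 1$ and $|N(y_1)\cap\{x_2,y_2\}| \le 1$ forbid three or more of them — reducing to Condition 1. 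The main obstacle is case analysis on where the projection of $a$ falls relative to the shared subpath of $P_1$ and $P_2$; I expect three or four subcases, each resolved by a short distance inequality.

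\emph{Condition 3.} This is the symmetric strengthening and the cleanest case: $a$ is ``close'' to both $x_1$ and $y_1$ (distance $\le k$) and ``far'' from both $x_2$ and $y_2$ (distance $> k$), and $b$ is the reverse. If $P_1$ and $P_2$ shared a node $w$, then $w$ lies on $P_1$, so $d_T(a,x_1) + d_T(a,y_1) \ge d_T(x_1,y_1)$ routes ``through'' the $a$-side, and since $w \in P_2$ one of $d_T(a,x_2), d_T(a,y_2)$ is at most $d_T(a,w) + (\text{half of } P_2)$; combining the $a$-bounds with the $b$-bounds along the shared subpath yields $2k < d_T(a,x_2) + d_T(b,y_1) \le \ldots \le 2k$, a contradiction. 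This is essentially the original Bibelnieks–Broin argument and I would simply reproduce it with the telescoping identity on the shared subpath. Overall, the hardest step is Condition 2: it is the genuinely new generalization, and making the projection case analysis airtight — especially ruling out the degenerate situations where projections coincide with $x_i$ or $y_i$ — is where the real work lies.
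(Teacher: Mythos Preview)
Your treatment of Condition~1 is essentially sound: once $P_1$ and $P_2$ share a subpath $[p,q]$, both telescoping sums are bounded by $d_T(x_1,y_1)+d_T(x_2,y_2)\le 2k$ \emph{simultaneously} (one is an equality, the other falls short by $2d_T(p,q)$), so in either relative orientation you get at least one edge from $\{x_1y_2,y_1x_2\}$ and at least one from $\{x_1x_2,y_1y_2\}$, hence at least two edges. Your phrase ``orienting the other way'' is misleading---the orientation is fixed by $T$, not a choice---but the conclusion survives.

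The genuine gap is in Condition~2. Your plan to ``reduce to Condition~1'' for the original edges $x_1y_1$ and $x_2y_2$ cannot close: the hypotheses $|N(x_1)\cap\{x_2,y_2\}|\le 1$ and $|N(y_1)\cap\{x_2,y_2\}|\le 1$ only exclude three or more connecting edges, whereas Condition~1 needs at most one, and (as you just argued) intersecting paths force exactly two---say $x_1x_2$ and $y_1x_2$---which is perfectly consistent with both constraints, so no contradiction arises. The projection-of-$a$ analysis you sketch does not use $a$ decisively enough to rescue this. The paper's key move is to apply Condition~1 not to $x_1y_1$ versus $x_2y_2$ but to the auxiliary edge $ax_1$ versus $x_2y_2$: since $a\notin N[x_2]\cup N[y_2]$, the only possible connecting edge is $x_1x_2$ or $x_1y_2$, and at most one of these exists by hypothesis, so the $T$-path from $a$ to $x_1$ is disjoint from $P_2$. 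Likewise the $T$-path from $a$ to $y_1$ is disjoint from $P_2$, and since $P_1$ is contained in the union of these two $a$-paths (tree property), $P_1\cap P_2=\emptyset$. The same bootstrap handles Condition~3 (apply Condition~1 to each $az_1$ versus $bz_2$, then use that $P_1$ lies in the two $a$-paths and $P_2$ in the two $b$-paths), replacing the direct distance manipulation you propose.
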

\begin{proof}
\ \newline
\noindent 1. Assume that the two paths are not disjoint.
Then $T$ contains (not necessarily distinct) nodes $s$ and $t$ such that (i) the path $x_1 \ad \ldots \ad y_1$ consists of three subpaths, firstly $x_1 \ad \ldots \ad s$, secondly $s \ad \ldots \ad t$, and thirdly $t \ad \ldots \ad y_1$ and (ii) the path $x_2 \ad \ldots \ad y_2$ consists of three subpaths, too, without loss of generality,  the first is $x_2 \ad \ldots \ad  s$ and the last is $t \ad \ldots \ad y_2$.
Hence, the path between $s$ and $t$ is the intersection between the two paths.
Because $x_1 \ad y_1$ and $x_2 \ad y_2$ in $G$ we get the following inequations by definition:
\begin{align}
d_T(x_1,y_1) &= d_T(x_1,s) + d_T(s,t) + d_T(t,y_1) \leq k \text{ and} \label{eqn_lem_criticalEdges_a}\\
d_T(x_2,y_2) &= d_T(x_2,s) + d_T(s,t) + d_T(t,y_2) \leq k.\label{eqn_lem_criticalEdges_b}
\end{align}
As at most one of the edges $x_1x_2,x_1y_2,y_1x_1,y_1y_2$ is in $E$, we know that at least one of $x_1y_2,y_1x_2 \not\in E$ and $x_1x_2,y_1y_2 \not\in E$ is true.
If $x_1 \nad y_2$ and $y_1 \nad x_2$, then we get
\begin{align}
d_T(x_1,y_2) &= d_T(x_1,s) + d_T(s,t) + d_T(t,y_2) > k\text{ and} \label{eqn_lem_criticalEdges_c}\\
d_T(y_1,x_2) &= d_T(y_1,t) + d_T(t,s) + d_T(s,x_2) > k\label{eqn_lem_criticalEdges_d}
\end{align}
such that combining (\ref{eqn_lem_criticalEdges_a}) and (\ref{eqn_lem_criticalEdges_c}) yields $d_T(t,y_1) < d_T(t,y_2)$ and combining (\ref{eqn_lem_criticalEdges_b}) and (\ref{eqn_lem_criticalEdges_d}) yields $d_T(t,y_2) < d_T(t,y_1)$, a contradiction.
Otherwise, if $x_1 \nad x_2$ and $y_1 \nad y_2$, we get the inequations
\begin{align}
d_T(x_1,x_2) &= d_T(x_1,s) + d_T(s,x_2) > k\text{ and} \label{eqn_lem_criticalEdges_e}\\
d_T(y_1,y_2) &= d_T(y_1,t) + d_T(t,y_2) > k\label{eqn_lem_criticalEdges_f}
\end{align}
such that combining equation (\ref{eqn_lem_criticalEdges_a}) and (\ref{eqn_lem_criticalEdges_e}) yields $d(x_2,s) > d_T(s,t) + d_T(t,y_1)$.
Putting this estimate of $d_T(x_2, s)$ into (\ref{eqn_lem_criticalEdges_b}) yields $d_T(y_1,t) + d_T(t,y_2) + 2d_T(s,t) < k$.
By (\ref{eqn_lem_criticalEdges_f}) we can conclude that $2d_T(s,t) < 0$, which is a contradiction to the preconditions.

\smallskip
\noindent 2. As the edges $ax_1$ and $x_2y_2$ are joined in $G$ by at most one edge, $x_1x_2$ or $x_1y_2$, it follows from 1.\ that $a \ad \ldots \ad x_1$ is disjoint from $x_2 \ad \ldots \ad y_2$ in $T$.
Analogously, the edges $ay_1$ and $x_2y_2$ are joined by at most one edge in $G$, either $y_1x_2$ or $y_1y_2$.
Hence, in $T$, the path $a \ad \ldots \ad y_1$ is disjoint from $x_2 \ad \ldots \ad y_2$, too.
Because $T$ is a tree, it follows that the nodes on $x_1 \ad \ldots \ad y_1$ are a subset of the combined nodes of the paths $a \ad \ldots \ad x_1$ and $a \ad \ldots \ad y_1$.
Consequently, there is no node that simultaneously belongs to $x_1 \ad \ldots \ad y_1$ and $x_2 \ad \ldots \ad y_2$.

\smallskip
\noindent 3. If $a \ad b$ then $z_1 \nad z_2$ for all $z_1 \in \{x_1,y_1\}$ and $z_2 \in \{x_2,y_2\}$.
Otherwise, $z_1 \ad a \ad b \ad z_2 \ad z_1$ is an induced $C_4$ in $G$.
Hence, in this case $x_1 \nad x_2$, $x_1 \nad y_2$, $y_1 \nad x_2$ and $y_1 \nad y_2$ and we are done.

If $a \nad b$, then for all $z_1 \in \{x_1,y_1\}$ and $z_2 \in \{x_2,y_2\}$, the edges $a \ad z_1$ and $b \ad z_2$ are joined at most by the edge $z_1 \ad z_2$ in $G$.
This means by 1.\ that $a \ad \ldots z_1$ is disjoint from $b \ad \ldots \ad z_2$ in $T$.
Again, as $T$ is a tree, it follows that the nodes on $x_1 \ad \ldots \ad y_1$ are a subset of the accumulated nodes on $a \ad \ldots \ad x_1$ and $a \ad \ldots \ad y_1$ and, similarly, the nodes on $x_2 \ad \ldots \ad y_2$ are a subset of the nodes on $b \ad \ldots \ad x_2$ and $b \ad \ldots \ad y_2$.
Consequently, there cannot be a node that simultaneously belongs to $x_1 \ad \ldots \ad y_1$ and $x_2 \ad \ldots \ad y_2$.
\end{proof}
Based on this more general concept, we can find a cycle $x_0 \ad y_{00} \ad y_{10} \ad x_1 \ad y_{11} \ad y_{01} \ad x_0$ in every graph from $G_1, \ldots, G_7$ such that many pairs of edges in the cycle fulfill at least one of the three conditions.
The following theorem states that this is never compatible with the existence of a leaf root.

\begin{theorem}[Proof in Section \ref{sec:techproofs}]\label{thm_counterexamples}
The graphs $G_1, \ldots, G_7$ are not in $\cL$.
\end{theorem}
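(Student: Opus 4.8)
The plan is to argue by contradiction, combining Lemma~\ref{lma_criticalEdges} with a parity argument for walks in trees. Suppose $G_i \in \cL$ for some $i \in \{1, \ldots, 7\}$, so $G_i$ is a $k$-leaf power with $k$-leaf root $T$ for some $k \geq 2$. In each $G_i$ we fix six pairwise distinct vertices forming the announced cycle $u_0 \ad u_1 \ad u_2 \ad u_3 \ad u_4 \ad u_5 \ad u_0$ (under the renaming $u_0 = x_0$, $u_1 = y_{00}$, $u_2 = y_{10}$, $u_3 = x_1$, $u_4 = y_{11}$, $u_5 = y_{01}$, with all six of its edges present in $G_i$), we write $e_m = u_{m-1}u_m$ for its $m$-th edge (indices modulo $6$), and we let $P_m$ be the unique $u_{m-1}$--$u_m$ path in $T$. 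Call $\{m, m'\}$ \emph{consecutive} if $m' \equiv m \pm 1 \pmod 6$ and \emph{non-consecutive} otherwise; since the $u_j$ are distinct, each non-consecutive pair of edges is a pair of edges on four distinct vertices and so falls under the hypothesis of Lemma~\ref{lma_criticalEdges}.

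The first and main step is to verify, separately for $G_1, \ldots, G_7$, that every non-consecutive pair of cycle edges $\{e_m, e_{m'}\}$ satisfies at least one of the three conditions of Lemma~\ref{lma_criticalEdges}; the lemma then yields that $P_m$ and $P_{m'}$ are vertex-disjoint for all nine non-consecutive pairs $\{1,3\},\{1,4\},\{1,5\},\{2,4\},\{2,5\},\{2,6\},\{3,5\},\{3,6\},\{4,6\}$. Concretely, for each $G_i$ I would name the actual vertices playing the roles $x_0, y_{00}, y_{10}, x_1, y_{11}, y_{01}$ and, wherever condition~$2$ or~$3$ is invoked, the witness $a$ (respectively the pair $a, b$), and then run through the nine pairs one at a time against the adjacencies and non-adjacencies of $G_i$ (trying, where needed, both assignments of a pair to the roles $x_1y_1$ and $x_2y_2$, as conditions~$2$ and~$3$ are not symmetric); this is best presented as a small table. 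For $G_7$ this reproduces the argument of Bibelnieks~\etal \cite{BibDea1993}. I expect this to be the genuine obstacle: not the per-pair checking, which is mechanical once the roles are assigned, but the upstream design of the six-cycle and its witnesses in each of the seven graphs so that no non-consecutive pair escapes conditions~$1$--$3$; a careless labeling can leave a pair uncovered, forcing a relabeling or an \emph{ad hoc} disjointness argument for that pair.

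Granting the nine disjointness relations, the contradiction is produced inside $T$. The closed walk $W = P_1 P_2 \cdots P_6$ lies in the tree $T$, so it traverses every edge of $T$ an even number of times; since each $P_m$ is a simple path using any edge at most once, every edge of $T$ lies on an even number of the paths $P_m$. As non-consecutive $P_m$ are vertex-disjoint, hence edge-disjoint, the set of indices $m$ with a fixed edge on $P_m$ contains no two non-consecutive elements, so it is empty or a consecutive pair $\{m, m+1\}$; by parity it is empty or such a pair. Writing $A_m := E(P_m) \cap E(P_{m+1})$, it follows that $A_0, \ldots, A_5$ are pairwise disjoint and $E(P_m) = A_{m-1} \cup A_m$ for every $m$. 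But $A_m$ is exactly the edge set of the initial segment of $P_m$ from the leaf $u_m$ to $\mu_m$, the median in $T$ of $u_{m-1}, u_m, u_{m+1}$; so $E(P_m) = A_{m-1} \cup A_m$ writes the path $P_m$ as the edge-disjoint union of an initial segment from its end $u_{m-1}$ and an initial segment from its end $u_m$, which forces these two segments to meet in exactly one vertex, namely $\mu_{m-1} = \mu_m$. Carrying this equality around the cycle shows that all six medians coincide in a single vertex $\mu$, which therefore lies on every $P_m$ --- in particular on both $P_1$ and $P_3$, contradicting their disjointness. Hence $G_i$ has no $k$-leaf root, which proves the theorem.
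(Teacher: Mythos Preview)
Your parity/median argument in the second half is sound, but the first step---establishing via Lemma~\ref{lma_criticalEdges} that all nine non-consecutive pairs of cycle edges yield disjoint $T$-paths---cannot be carried out for $G_7$ (nor for $G_5$, $G_6$). Take the pair $\{e_2,e_6\}=\{y_{00}y_{10},\,y_{01}x_0\}$ in $G_7$: all four cross-edges $y_{00}y_{01}$, $y_{00}x_0$, $y_{10}y_{01}$, $y_{10}x_0$ are present, so condition~1 fails; each endpoint of one edge is adjacent to both endpoints of the other, so condition~2 fails for either assignment of roles; and every vertex adjacent to both $y_{00}$ and $y_{10}$ (namely $x_0,x_1,y_{01},y_{11},z_0$) is also adjacent to $x_0$, so no witness $a$ for condition~3 exists. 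The same obstruction---that $x_0$ and $x_1$ are adjacent to \emph{all} of $y_{00},y_{01},y_{10},y_{11}$---kills the pairs $\{e_1,e_5\}$, $\{e_2,e_4\}$, $\{e_3,e_5\}$ as well. No relabeling of the six-cycle repairs this, because the obstruction is intrinsic to how $x_0,x_1$ sit inside $G_7$; and an ``ad hoc disjointness argument'' is not available either, since you are arguing toward a contradiction and there is no reason the corresponding $T$-paths \emph{would} be disjoint in a hypothetical leaf root.

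The paper's proof avoids this by requiring only five of the nine relations: the four ``opposite-hub'' pairs $\{x_0y_{0j},\,x_1y_{1j'}\}$ for $j,j'\in\{0,1\}$ (each handled by condition~3 with $a=z_{0j}$, $b=z_{1j'}$), together with the single pair $\{y_{00}y_{10},\,y_{01}y_{11}\}$ (handled by condition~1 or~2 in $G_1$--$G_4$ and by condition~3 with $a=z_0$, $b=z_1$ in $G_5$--$G_7$). From the four hub pairs one concludes that the two subtrees $T_0=(x_0\text{--}y_{00})\cup(x_0\text{--}y_{01})$ and $T_1=(x_1\text{--}y_{10})\cup(x_1\text{--}y_{11})$ of $T$ are vertex-disjoint; then every $T$-path from $T_0$ to $T_1$ passes through a common separating node $z$, which therefore lies on both the $y_{00}$--$y_{10}$ path and the $y_{01}$--$y_{11}$ path, contradicting the fifth relation. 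So the contradiction comes not from a global parity argument on the six-cycle but from a ``two hubs plus one bridge'' picture that needs strictly fewer (and, crucially, only the verifiable) applications of Lemma~\ref{lma_criticalEdges}.
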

\begin{figure}[ptb]
\tikzstyle{vt}=[draw=black,fill=black,circle,minimum size=3pt,inner sep=0]
\newcommand{\spread}{0.125}
\newcommand{\vspread}{0.2cm}
\newcommand{\radius}{1.75}
\newcommand{\tscale}{0.9}
\newcommand{\gscale}{0.88}
\newcommand{\clique}[1]{
\foreach \x in {#1} {
\foreach \y in {#1} {
\draw (\x) edge (\y);
}}}
\newcommand{\counterBase}{%
\draw (0,0) +(180:1) node[vt] (u0) {};
\draw (0,0) +(120:1) node[vt] (v00) {};
\draw (0,0) +(240:1) node[vt] (v01) {};
\draw (v00) +(-1,0) node[vt] (w00) {};
\draw (v01) +(-1,0) node[vt] (w01) {};
\draw (\spread,0) +(0:1) node[vt] (u1) {};
\draw (\spread,0) +(60:1) node[vt] (v10) {};
\draw (\spread,0) +(-60:1) node[vt] (v11) {};
\draw (v10) +(1,0) node[vt] (w10) {};
\draw (v11) +(1,0) node[vt] (w11) {};
\clique{u0,v00,v10,u1}
\clique{u0,v01,v11,u1}
\clique{w00,u0,v00}
\clique{w01,u0,v01}
\clique{w10,u1,v10}
\clique{w11,u1,v11}
\draw (u0) node[left,scale=\tscale] {$x_0$};
\draw (w00) node[above,scale=\tscale] {$z_{00}$};
\draw (w01) node[below,scale=\tscale] {$z_{01}$};
\draw (u1) node[right,scale=\tscale] {$x_1$};
\draw (w10) node[above,scale=\tscale] {$z_{10}$};
\draw (w11) node[below,scale=\tscale] {$w_{11}$};
\draw (v00) node[anchor=300,scale=\tscale] {$y_{00}$};
\draw (v10) node[anchor=240,scale=\tscale] {$y_{10}$};
\draw (v01) node[anchor=60,scale=\tscale] {$y_{01}$};
\draw (v11) node[anchor=120,scale=\tscale] {$y_{11}$};
}
\begin{minipage}[t]{\textwidth}
\begin{minipage}[c]{0.3\textwidth}
\centering
\tikz[baseline={(0,0)},scale=\gscale]{
\counterBase
}\\[\vspread]
$G_1$\\[\vspread]
\vspace*{\vspread}
\tikz[baseline={(0,0)},scale=\gscale]{
\counterBase
\draw (v00) edge (v01);
}\\[\vspread]
$G_2$\\[\vspread]
\vspace*{\vspread}
\tikz[baseline={(0,0)},scale=\gscale]{
\counterBase
\draw (v00) edge (v11);
}\\[\vspread]
$G_3$
\end{minipage}\hfill
\begin{minipage}[c]{0.3\textwidth}
\centering
\tikz[baseline={(0,0)},scale=\gscale]{
\counterBase
\draw (v00) edge (v11);
\draw (v00) edge (v01);
\draw (0.5*\spread,-1.75) node[vt] (w1) {};
\draw (w1) edge (v01);
\draw (w1) edge (v11);
\path (w1) -| (u0) coordinate[pos=0.5] (c0);
\path (w1) -| (u1) coordinate[pos=0.5] (c1);
\draw (w1) .. controls (c0) .. (u0);
\draw (w1) .. controls (c1) .. (u1);
\draw (w1) node[below,scale=\tscale] {$z_1$};
}\\[\vspread]
$G_4$\\[\vspread]
\vspace*{\vspread}
\tikz[baseline={(0,0)},scale=\gscale]{
\counterBase
\draw (v00) edge (v11);
\draw (v00) edge (v01);
\draw (v10) edge (v11);
\draw (0.5*\spread,1.75) node[vt] (w0) {};
\draw (w0) edge (v00);
\draw (w0) edge (v10);
\path (w0) -| (u0) coordinate[pos=0.5] (c0);
\path (w0) -| (u1) coordinate[pos=0.5] (c1);
\draw (w0) .. controls (c0) .. (u0);
\draw (w0) .. controls (c1) .. (u1);
\draw (w0) node[above,scale=\tscale] {$z_0$};
\draw (0.5*\spread,-1.75) node[vt] (w1) {};
\draw (w1) edge (v01);
\draw (w1) edge (v11);
\path (w1) -| (u0) coordinate[pos=0.5] (c0);
\path (w1) -| (u1) coordinate[pos=0.5] (c1);
\draw (w1) .. controls (c0) .. (u0);
\draw (w1) .. controls (c1) .. (u1);
\draw (w1) node[below,scale=\tscale] {$z_1$};
}\\[\vspread]
$G_5$
\end{minipage}\hfill
\begin{minipage}[c]{0.3\textwidth}
\centering
\tikz[baseline={(0,0)},scale=\gscale]{
\counterBase
\draw (v00) edge (v11);
\draw (v00) edge (v01);
\draw (v01) edge (v10);
\draw (0.5*\spread,1.75) node[vt] (w0) {};
\draw (w0) edge (v00);
\draw (w0) edge (v10);
\path (w0) -| (u0) coordinate[pos=0.5] (c0);
\path (w0) -| (u1) coordinate[pos=0.5] (c1);
\draw (w0) .. controls (c0) .. (u0);
\draw (w0) .. controls (c1) .. (u1);
\draw (w0) node[above,scale=\tscale] {$z_0$};
\draw (0.5*\spread,-1.75) node[vt] (w1) {};
\draw (w1) edge (v01);
\draw (w1) edge (v11);
\path (w1) -| (u0) coordinate[pos=0.5] (c0);
\path (w1) -| (u1) coordinate[pos=0.5] (c1);
\draw (w1) .. controls (c0) .. (u0);
\draw (w1) .. controls (c1) .. (u1);
\draw (w1) node[below,scale=\tscale] {$z_1$};
}\\[\vspread]
$G_6$\\[\vspread]
\vspace*{\vspread}
\tikz[baseline={(0,0)},scale=\gscale]{
\counterBase
\draw (v00) edge (v11);
\draw (v00) edge (v01);
\draw (v10) edge (v11);
\draw (v01) edge (v10);
\draw (0.5*\spread,1.75) node[vt] (w0) {};
\draw (w0) edge (v00);
\draw (w0) edge (v10);
\path (w0) -| (u0) coordinate[pos=0.5] (c0);
\path (w0) -| (u1) coordinate[pos=0.5] (c1);
\draw (w0) .. controls (c0) .. (u0);
\draw (w0) .. controls (c1) .. (u1);
\draw (w0) node[above,scale=\tscale] {$z_0$};
\draw (0.5*\spread,-1.75) node[vt] (w1) {};
\draw (w1) edge (v01);
\draw (w1) edge (v11);
\path (w1) -| (u0) coordinate[pos=0.5] (c0);
\path (w1) -| (u1) coordinate[pos=0.5] (c1);
\draw (w1) .. controls (c0) .. (u0);
\draw (w1) .. controls (c1) .. (u1);
\draw (w1) node[below,scale=\tscale] {$z_1$};
}\\[\vspread]
$G_7$
\end{minipage}
\end{minipage}

\tikzstyle{myNode}=[shape=rounded rectangle, text width=1.1cm, text centered,draw=black,scale=\tscale]
\tikzstyle{myNodeOpt}=[shape=rounded rectangle, text width=1.1cm, text centered,draw=black, densely dashed,scale=\tscale]
\tikzstyle{myNodeSta}=[shape=rounded rectangle, text width=1.1cm, text centered,draw=black, double,scale=\tscale]
\tikzstyle{myNodeTer}=[shape=rounded rectangle, text width=1.1cm, text centered,draw=black, line width=0.85pt,scale=\tscale]
\tikzstyle{myEdge}=[color=black]
\tikzstyle{myEdgeOpt}=[color=black, densely dashed]
\tikzstyle{myEdgeCyc}=[color=black, line width=1pt]
\vspace*{\vspread}
\vspace*{\vspread}
\tikz[baseline={(0,0)}]{
\node[myNodeSta] (s0) at (180:\radius) {$x_0$};
\node[myNodeSta] (s1) at (0:\radius) {$x_1$};
\node[myNode] (t) at (0,0) {$x_1 x_2$}
  edge[myEdge,<-] (s0)
  edge[myEdge,<-] (s1);
\node[myNode] (p00) at (s0|-0,0.66*\radius) {$x_0 y_{00}$}
  edge[myEdgeCyc,<-] (s0);
\node[myNode] (p01) at (s0|-0,-0.66*\radius) {$x_0 y_{01}$}
  edge[myEdgeCyc,<-] (s0); 
\node[myNode] (p10) at (s1|-0,0.66*\radius) {$x_1 y_{10}$}
  edge[myEdgeCyc,<-] (s1);
\node[myNode] (p11) at (s1|-0,-0.66*\radius) {$x_1 y_{11}$}
  edge[myEdgeCyc,<-] (s1); 
\node[myNodeTer] (t0) at (90:\radius) {$x_0 x_1$ $y_{00} y_{10}$}
  edge[myEdge,<-] (t)
  edge[myEdgeCyc,<-] (p00)
  edge[myEdgeCyc,<-] (p10);
\node[myNodeTer] (t1) at (270:\radius) {$x_0 x_1$ $y_{01} y_{11}$}
  edge[myEdge,<-] (t)
  edge[myEdgeCyc,<-] (p01)
  edge[myEdgeCyc,<-] (p11);
\node[myNode] (q00) at (s0|-0,1.33*\radius) {$x_0 y_{00}$ $z_{00}$}
  edge[myEdge,<-] (p00);
\node[myNode] (q01) at (s0|-0,-1.33*\radius) {$x_0 y_{01}$ $z_{01}$}
  edge[myEdge,<-] (p01);
\node[myNode] (q10) at (s1|-0,1.33*\radius) {$x_1 y_{10}$ $z_{10}$}
  edge[myEdge,<-] (p10);
\node[myNode] (q11) at (s1|-0,-1.33*\radius) {$x_1 y_{11}$ $z_{11}$}
  edge[myEdge,<-] (p11);
\node[myNodeOpt] (t0p) at (t0|-0,2*\radius) {$x_0 x_1 z_0$ $y_{00} y_{10}$};
\draw[myEdgeOpt,->] (t0) -- (t0p);
\node[myNodeOpt] (t1p) at (t1|-0,-2*\radius) {$x_0 x_1 z_1$ $y_{01} y_{11}$};
\draw[myEdgeOpt,->] (t1) -- (t1p);
\node[myNodeOpt] (tp) at (s0.west|-t0p) {$x_0 x_1$ $y_{00} y_{10}$ $y_{01} y_{11}$};
\draw[myEdgeOpt,->] (t0.north west) .. controls (t0.north west|-tp.south) .. (tp);
\draw[myEdgeOpt,->] (t1.south west) .. controls (t1p.south west|-t1p) and (t1p.south-|tp) .. (t1.south west-|tp.south west) -- (tp.south west);
\node at (0,-2.5*\radius) {$\cA(G_1)$ and $\cA(G_7)$};
}
\hfill
\tikz[baseline={(0,0)}]{
\node[myNodeSta] (s0) at (180:\radius) {$x_0$};
\node[myNodeSta] (s1) at (0:\radius) {$x_1$};
\node[myNode] (t) at (0,0) {$x_1 x_2$}
  edge[myEdge,<-] (s0)
  edge[myEdge,<-] (s1);
\node[myNode] (p00) at (s0|-0,0.5*\radius) {$x_0 y_{00}$}
  edge[myEdgeCyc,<-] (s0);
\node[myNode] (p01) at (s0|-0,-0.5*\radius) {$x_0 y_{01}$}
  edge[myEdgeCyc,<-] (s0);
\node[myNode] (p00p) at (s0|-0,\radius) {$x_0 x_1 y_{00}$}
  edge[myEdgeCyc,<-] (p00);
\draw[myEdge,->] (t) -- (p00p.343);
\node[myNode] (p01p) at (s0|-0,-\radius) {$x_0 x_1 y_{01}$}
  edge[myEdgeCyc,<-] (p01);
\draw[myEdge,->] (t) -- (p01p.17);
\node[myNode] (p10) at (s1|-0,0.66*\radius) {$x_1 y_{10}$}
  edge[myEdgeCyc,<-] (s1);
\node[myNode] (p11) at (s1|-0,-0.66*\radius) {$x_1 y_{11}$}
  edge[myEdgeCyc,<-] (s1); 
\node[myNodeTer] (t0) at (90:\radius) {$x_0 x_1$ $y_{00} y_{10}$}
  edge[myEdgeCyc,<-] (p00p)
  edge[myEdgeCyc,<-] (p10);
\node[myNodeTer] (t1) at (270:\radius) {$x_0 x_1$ $y_{01} y_{11}$}
  edge[myEdgeCyc,<-] (p01p)
  edge[myEdgeCyc,<-] (p11);
\node[myNode] (q00) at (-2*\radius,0|-p00) {$x_0 y_{00}$ $z_{00}$}
  edge[myEdge,<-] (p00);
\node[myNode] (q01) at (-2*\radius,0|-p01) {$x_0 y_{01}$ $z_{01}$}
  edge[myEdge,<-] (p01);
\node[myNode] (q10) at (s1|-0,1.33*\radius) {$x_1 y_{10}$ $z_{10}$}
  edge[myEdge,<-] (p10);
\node[myNode] (q11) at (s1|-0,-1.33*\radius) {$x_1 y_{11}$ $z_{11}$}
  edge[myEdge,<-] (p11);
\node[myNode] (qp) at (-2.5*\radius,0|-s0) {$x_0 x_1$ $y_{00} y_{01}$};
\draw[myEdge,->] (p00p) .. controls (p00p-|qp) .. (qp);
\draw[myEdge,->] (p01p) .. controls (p01p-|qp) .. (qp);
\node[myNodeOpt] (t0p) at (t0|-0,2*\radius) {$x_0 x_1 z_0$ $y_{00} y_{10}$};
\draw[myEdgeOpt,->] (t0) -- (t0p);
\node[myNodeOpt] (t1p) at (t1|-0,-2*\radius) {$x_0 x_1 z_1$ $y_{01} y_{11}$};
\draw[myEdgeOpt,->] (t1) -- (t1p);
\node[myNodeOpt] (qp0) at (qp|-t0p) {$x_0 x_1 y_{10}$ $y_{00} y_{01}$};
\node[myNodeOpt] (qp1) at (qp|-t1p) {$x_0 x_1 y_{11}$ $y_{00} y_{01}$};
\draw[myEdgeOpt,->] (qp.north west) -- (qp0.south west);
\draw[myEdgeOpt,->] (qp.south west) -- (qp1.north west);
\draw[myEdgeOpt,->] (t0.north west) .. controls (t0.north west|-0,1.5*\radius)  .. (-1.25*\radius,1.5*\radius) .. controls (qp0|-0,1.5*\radius) .. (qp0);
\draw[myEdgeOpt,->] (t1.south west) .. controls (t1.south west|-0,-1.5*\radius)  .. (-1.25*\radius,-1.5*\radius) .. controls (qp1|-0,-1.5*\radius) .. (qp1);
\node at (0,-2.5*\radius) {$\cA(G_2)$ and $\cA(G_6)$};
}
\caption{
The graphs $G_1, \ldots, G_7$.
The bottom left figure displays $\cA(G_7)$ and, without dashed nodes and arcs, it shows $\cA(G_1)$.
Analogously, the bottom right figure presents $\cA(G_6)$ or, without the dashed parts, $\cA(G_2)$.
Bold arcs emphasize the bad $2$-cycle, where starters are double framed and terminals bold framed.
}
\label{fig:counterexamples}
\end{figure}

This implies that $G_1, \ldots, G_7$ are forbidden induced subgraphs for $\cL$.
In the following section, we analyze the clique arrangement of these seven graphs and show that they share one particular cyclic property, related to bad $k$-cycles.

\section{Forbidden Cycles in Leaf Power Clique Arrangements}

As shown in \cite{NevRos2013}, strongly chordal graphs can be characterized by forbidden bad $k$-cycles in their clique arrangements, where $k \geq 3$.
But by Theorem \ref{thm_counterexamples}, this does not fully capture the cyclic structure that is forbidden in leaf powers.
In this section, we show that there are certain kinds of $2$-cycles which may not occur as a subgraph in the clique arrangement of a leaf power.
In particular, we call a $2$-cycle \emph{bad}, if for all $i,j \in \{0,1\}$ there is a directed path from starter $S_i$ to terminal $T_j$ that does not contain a node $X$ which fulfills $S_0 \cup S_1 \subseteq X \subseteq T_0 \cap T_1$.
The following theorem provides the main argument of this paper:
\begin{theorem}\label{thm_bad2cyc_counterexp}
Let $G=(V,E)$ be a strongly chordal graph with clique arrangement $\cA(G)=(\cX,\cE)$.
The graph $\cA(G)$ contains a bad $2$-cycle, if and only if $G$ contains one of the graphs $G_1, \ldots, G_7$ as an induced subgraph.
\end{theorem}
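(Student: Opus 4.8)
The plan is to prove the two directions separately, and for each direction to proceed via the forbidden-subgraph analysis rather than working directly with leaf roots. For the easy direction, suppose $G$ contains one of $G_1,\dots,G_7$ as an induced subgraph, say $G[U]\cong G_i$. First I would compute $\cA(G_i)$ explicitly; the two pictures in Figure~\ref{fig:counterexamples} already display these seven clique arrangements, and in each of them one reads off a $2$-cycle with starters $S_0=\{x_0\}$, $S_1=\{x_1\}$ and terminals $T_0=\{x_0,x_1,y_{00},y_{10}\}$, $T_1=\{x_0,x_1,y_{01},y_{11}\}$ (plus possibly the vertices $z_0,z_1$ in the larger graphs): indeed $S_0\subseteq T_0\cap T_1$ via the paths through $\{x_0,y_{00}\}$ and $\{x_0,y_{01}\}$, and symmetrically for $S_1$. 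The point is that this $2$-cycle is \emph{bad}: no node $X$ on any of these four bold paths satisfies $S_0\cup S_1=\{x_0,x_1\}\subseteq X\subseteq T_0\cap T_1=\{x_0,x_1\}$, because the only such $X$ would be $\{x_0,x_1\}$ itself, and one checks in each $\cA(G_i)$ that $\{x_0,x_1\}$ is not a node (or if it is, e.g.\ as the intersection of the two central cliques, it does not lie on the bold $2$-cycle paths, which run through the $y$-nodes). To transport this up from $G_i$ to $G$, I invoke Lemma~\ref{lma_cliqueArrangement_inducedSubgraphs}: the map $\phi:\cX(G_i)\to\cX(G)$ preserves equality and directed reachability, so $\phi(S_0),\phi(S_1),\phi(T_0),\phi(T_1)$ together with the images of the intermediate nodes form a $2$-cycle in $\cA(G)$; and since $\phi$ preserves the non-existence of directed paths, the "no bad intermediate node" property survives as well. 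The only subtlety is that $\phi$ need not be injective on set-\emph{inclusions}, so I would phrase the badness condition purely in terms of presence/absence of directed paths between the named nodes — which is exactly what the definition of a bad $2$-cycle allows — and then $\phi$ carries everything over verbatim.

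For the hard direction, assume $\cA(G)$ contains a bad $2$-cycle with starters $S_0,S_1$, terminals $T_0,T_1$, and directed paths $P_{ij}$ from $S_i$ to $T_j$ none of which passes through a node $X$ with $S_0\cup S_1\subseteq X\subseteq T_0\cap T_1$. I would reconstruct a copy of some $G_i$ inside $G$ by naming vertices. Set $x_0$ to be a vertex of $S_0\setminus(T_0\cap T_1)$ — this exists since otherwise $S_0$ itself would be a bad intermediate node — and symmetrically $x_1\in S_1\setminus(T_0\cap T_1)$; using Lemma~\ref{lma_node_to_sink_intersection} and Lemma~\ref{obs_sink_intersection}, replace each $T_j$ by a sink (maximal clique) $C_j$ reachable from it so that the relevant intersections are preserved, giving maximal cliques $C_0\supseteq T_0$, $C_1\supseteq T_1$. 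Then I pick, for each of the four paths $P_{ij}$, a vertex $y_{ij}$ that is "responsible" for the path leaving $S_i$ but not reaching past the first intermediate node toward $T_j$ — concretely, $y_{ij}$ lies in the out-neighbor of $S_i$ along $P_{ij}$ but not in $S_i$; Lemma~\ref{lma_vertex_node_none_adjacency} applied to consecutive maximal-clique representatives then forces the required non-adjacencies among the $y$'s and $x$'s. The outer vertices $z_{00},z_{01},z_{10},z_{11}$ come from Lemma~\ref{lma_vertex_node_none_adjacency} applied to the pair (maximal clique containing $\{x_i,y_{ij}\}$) vs.\ ($C_j$): such a clique yields a private vertex $z_{ij}$ non-adjacent to everything in $C_j\setminus$ that clique. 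After this, the induced subgraph on $\{x_0,x_1,y_{00},y_{01},y_{10},y_{11},z_{00},z_{01},z_{10},z_{11}\}$ has the "core" adjacencies of the $G_i$ family forced; the \emph{optional} edges $y_{00}y_{01}$, $y_{00}y_{11}$, $y_{10}y_{11}$, $y_{01}y_{10}$ and the extra vertices $z_0,z_1$ may or may not be present, and each of the $2^{\text{something}}$ combinations is exactly one of $G_1,\dots,G_7$ (the enumeration matching Figure~\ref{fig:counterexamples}). I would organize this as a short case analysis, using strong chordality (no $3$-sun, Theorem~\ref{thm_badcycles}) and the structural Lemma~\ref{lma_stronglyChordal_cliqueCutStructure} to rule out the configurations that are \emph{not} on the list — in particular to show the $y$'s cannot all be mutually adjacent and the clique structure cannot degenerate.

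The main obstacle I anticipate is the bookkeeping in the hard direction: making sure the vertices $y_{ij}$ and $z_{ij}$ can be chosen \emph{simultaneously} and \emph{distinctly}, so that they genuinely induce a $10$- to $12$-vertex subgraph rather than collapsing. The danger is that, say, the out-neighbor of $S_0$ along $P_{00}$ equals the out-neighbor along $P_{01}$, which would merge $y_{00}$ with $y_{01}$; the badness hypothesis (no intermediate node $X$ with $S_0\cup S_1\subseteq X\subseteq T_0\cap T_1$) is precisely the lever that prevents such collapses, and I expect the heart of the proof to be a lemma saying: if a $2$-cycle is bad, then along each path $P_{ij}$ the first step out of $S_i$ introduces a vertex lying in $C_j\setminus C_{1-j}$ (respectively $C_{1-j}\setminus C_j$), and these four vertices are pairwise distinct. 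Once that separation lemma is in place, the remaining adjacency/non-adjacency checks are routine applications of Lemmas~\ref{lma_vertex_node_none_adjacency} and~\ref{obs_node_intersection}, and the final step is just matching the resulting (core $+$ optional edges) graph against the list $G_1,\dots,G_7$. Conversely, the reason the list has exactly seven members and not, say, sixteen, is that strong chordality kills the remaining nine optional-edge patterns — verifying this is the second place where care is needed, but it is a finite check on small graphs.
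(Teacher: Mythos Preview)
Your overall architecture matches the paper's, but there are genuine gaps in both directions.

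In the hard direction, your vertex selection fails at the first step: you propose to take $x_0\in S_0\setminus(T_0\cap T_1)$, but by the very definition of a $2$-cycle one has $S_0\subseteq T_0\cap T_1$, so this set is empty. The paper's $x_i$ (called $u_i$ there) is instead a vertex of $S_i$ avoiding certain auxiliary sinks $Q_{(1-i)0},Q_{(1-i)1}$, and proving such a vertex exists is itself a nontrivial claim (Claim~\ref{cla_uvertices}). More fundamentally, you are missing the key opening move: the paper does not work with an arbitrary bad $2$-cycle but selects one that first minimizes $|T_0|+|T_1|$ and then maximizes $|S_0|+|S_1|$. This extremality is what powers Claims~\ref{cla_pnode:intersection:vertical} and~\ref{cla_tsinks}, and without it the ``separation lemma'' you are hoping for (distinctness and correct non-adjacency of the four $y_{ij}$) is not available. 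The final case analysis is also organised differently from what you sketch: not by the $2^4$ optional-edge patterns among the $y$'s, but by how many of four auxiliary vertex sets $V_0,V_1,D_0,D_1$ are cliques, and in several branches one must \emph{replace} a chosen $v_{ij}$ by an extra vertex $w_k$ coming from Claim~\ref{cla_twvertices} before the induced subgraph matches one of $G_1,\dots,G_7$.

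In the easy direction, your transport argument via Lemma~\ref{lma_cliqueArrangement_inducedSubgraphs} is incomplete. Badness of a $2$-cycle is not a statement about reachability among the four named nodes; it says some $S_i$--$T_j$ path avoids \emph{every} node $X$ with $S_0\cup S_1\subseteq X\subseteq T_0\cap T_1$. When passing from $\cA(G_i)$ to $\cA(G)$, the path from $\phi(S_i)$ to $\phi(T_j)$ may acquire intermediate nodes of $\cA(G)$ that are not in the image of $\phi$, and any of these could be such an $X$ --- so ``$\phi$ preserves non-existence of directed paths'' does not settle it. The paper deals with this by routing each of the four paths through the images $\phi(P_{ij})$ (with side-sink $\phi(Q_{ij})$) and then arguing from the actual adjacencies in $G$: an offending $X$ on the initial segment would force $x_1\mathord{-}z_{00}$, and on the final segment would force $y_{00}\in\phi(T_1)$ and ultimately $y_{00}\mathord{-}z_1$, both contradictions. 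So the converse genuinely needs the edge structure of the $G_i$, not just the abstract reachability preservation of $\phi$.
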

\begin{proof}
The proof starts by showing the first direction, that is, if $\cA(G)$ contains a bad $2$-cycle, then $G$ contains one of the graphs $G_1, \ldots, G_7$ as an induced subgraph.
Among the bad $2$-cycles of $\cA(G)$ we select a cycle with starters $S_0,S_1$ and terminals $T_0,T_1$ that primarily minimizes the summed cardinalities of the terminals $|T_0| + |T_1|$ and secondarily maximizes the summed cardinalities of the starters $|S_0| + |S_1|$.
Because $T_0$ and $T_1$ have a non-empty intersection, which contains at least $S_0 \cup S_1$, Lemma \ref{obs_node_intersection} provides a node $T = T_0 \cap T_1$.

In the following we provide a number of claims to support our arguments.
The proofs of all these claims are found in Section \ref{sec:techproofs}.
We start by shaping the bad $2$-cycle:
\begin{clai}\label{cla_pnodes}
For all $i,j \in \{0,1\}$ there is a path $B_{ij}$ from $S_i$ to $T_j$ that does not contain a node $X$ with $S_0 \cup S_1 \subseteq X \subseteq T_0 \cap T_1$, in particular $B_{ij}$ does not contain $T$, such that $B_{ij}$ contains a node $P_{ij}$ with
\begin{itemize}
\item[(1)]
$S_i \subseteq P_{ij} \subseteq T_j$, 
\item[(2)]
$S_{1-i} \not\subseteq P_{ij}$, 
\item[(3)]
$P_{ij} \not\subseteq T$, and
\item[(4)]
there exists a sink $Q_{ij}$ in $\cA(G)$ with $S_{1-i} \not\subseteq Q_{ij}$ that fulfills $P_{ij} = Q_{ij} \cap T_{j}$.
\end{itemize}
\end{clai}

In the following we refer to the nodes $P_{ij}$ by the $P$-nodes and we call $Q_{ij}$ the $Q$-nodes.
The pure existence of the $Q$-nodes does not directly imply that they are different:
\begin{clai}\label{cla_qnode:intersection}
For all $i,j,i',j' \in \{0,1\}$ with $(i,j) \not= (i',j')$, the sinks $Q_{ij}$ and $Q_{i'j'}$ differ.
\end{clai}
For the pairwise intersection between the $P$-nodes, Claim \ref{cla_pnodes} directly implies for all $i,j,j' \in \{0,1\}$ that $P_{ij} \not\subseteq P_{(1-i)j'}$.
We can now infer the following two additional statements about the intersections between the $P$-nodes and the intersections between the $Q$-nodes:
\begin{clai}\label{cla_pnode:intersection:vertical}
For all $i \in \{0,1\}$ it is true that $P_{i0} \cap P_{i1}=S_i$.
\end{clai}
\begin{clai}\label{cla_pqnode:intersection:hor:dia}
For all $i,i' \in \{0,1\}$ it is true that $P_{0i} \cap P_{1i'} \subseteq T$ and $Q_{0i} \cap Q_{1i'} \subseteq T$.
\end{clai}
We deduce that $P_{ij} \cap P_{i'j'} \subseteq T$ for all $i,j,i',j' \in \{0,1\}$ with $(i,j) \not= (i',j')$.
Following the construction of the $P$-nodes, we also know for all $i,j \in \{0,1\}$ that the set $P'_{ij} = P_{ij} \setminus T$ is not empty.

Using the collected facts about the mentioned nodes on the bad $2$-cycle, the next two claims start selecting vertices to construct one of the induced subgraphs $G_1, \ldots, G_7$:
\begin{clai}\label{cla_uvertices}
For all $i \in \{0,1\}$, the starter $S_i$ contains a vertex $u_i$ such that $u_i \not\in Q_{(1-i)0} \cup Q_{(1-i)1}$.
\end{clai}
\begin{clai}\label{cla_pwvertices}
For all $i,j \in \{0,1\}$, there is a vertex $w_{ij} \in Q_{ij} \setminus P_{ij}$ such that
\begin{itemize}
\item[(1)]
for all $i',j' \in \{0,1\}$ it is true that $w_{ij}=w_{i'j'} \Longleftrightarrow (i,j) = (i',j')$ and
\item[(2)]
$w_{ij}$ is neither adjacent to $u_{1-i},w_{i(1-j)},w_{(1-i)j)},w_{(1-i)(1-j)}$, nor to any vertex in $P'_{i(1-j)}$, in $P'_{(1-i)j}$ or in $P'_{(1-i)(1-j)}$.
\end{itemize}
\end{clai}

Depending on the edges between the six central vertices of $G_1, \ldots, G_7$, there exist up to two additional vertices in $G_4, \ldots, G_7$.
This dependency is also visible in the clique arrangement. 
Consider the sets $V_0 = P_{00} \cup P_{01}$, $V_1 = P_{10} \cup P_{11}$, $D_0 = P_{00} \cup P_{11}$ and $D_1 = P_{01} \cup P_{10}$ and moreover, for all $i,j \in \{0,1\}$ let $C_{ij} = V_i \cup D_j$.
If one of the sets $C_{ij}, i,j \in \{0,1\}$ induces a clique in $G$, then it follows that $T_0$ or $T_1$ are proper subsets of maximal cliques in $G$:
\begin{clai}\label{cla_tsinks}
For all $i,j \in \{0,1\}$ and $k = (i+j+1) \mod 2$, the node $T_k$ is not a sink in $\cA(G)$, if $C_{ij}$ is a clique in $G$.
\end{clai}
In such a case, if $C_{ij}$ is a clique, we select an additional vertex from the sink that is reachable from $T_k$:
\begin{clai}\label{cla_twvertices}
For all $i,j \in \{0,1\}$ and $k = (i+j+1) \mod 2$, if $C_{ij}$ is a clique in $G$, then there is a sink $T'_k$ which is reachable from $T_k$ and contains a vertex $w_k \in T'_k \setminus (P_{0k} \cup P_{1k} \cup T_{1-k})$ such that 
\begin{itemize}
\item[(1)]
$w_k$ is not one of the vertices $w_{1-k}, w_{00}, w_{01}, w_{10}, w_{11}$,
\item[(2)]
$w_k$ is neither adjacent to $w_{1-k}, w_{0(1-k)},w_{1(1-k)}$ nor to any vertex in $T_{1-k} \setminus T$, and
\item[(3)]
$w_k$ is adjacent to at most one vertex of $w_{0k}$ and $w_{1k}$.
\end{itemize}
\end{clai}

In the remainder of the proof we select the central vertices $v_{ij}$ from $P'_{ij}$ for all $i,j \in \{0,1\}$ to ultimately induce a forbidden subgraph.
But before explaining how to select these four vertices, we briefly summarize the results gathered in the proof so far.
By Claim \ref{cla_uvertices}, we know that there are vertices $u_0,u_1$ and, from the construction of the $P$-nodes in Claim \ref{cla_pnodes}, it follows that $\{u_0,u_1,v_{00},v_{10}\}$ and $\{u_0,u_1,v_{01},v_{11}\}$ are cliques in $G$, regardless of the choice of $v_{00},v_{01},v_{10},v_{11}$.
Moreover, by Claim \ref{cla_pwvertices}, there exists an independent set $\{w_{00},w_{01},w_{10},w_{11}\}$ in $G$ such that for all $i,j \in \{0,1\}$, the vertex $w_{ij}$ is adjacent to $u_i$ and $v_{ij}$ but not to any of the vertices $u_{1-i},v_{i(1-j)},v_{(1-i)j},v_{(1-i)(1-j)}$. 
Finally, Claim \ref{cla_twvertices} states that certain circumstances imply the existence of two non-adjacent vertices $w_0$ and $w_1$ in $G$ that are both adjacent to $u_0$ and $u_1$ and such that for all $k \in \{0,1\}$ it is true that $w_k$ is adjacent to $v_{0k}$ and $v_{1k}$ but not adjacent to $v_{0(1-k)},v_{1(1-k)},w_{0(1-k)}$ and $w_{0(1-k)}$.
The claim leaves it open, if $w_k$ can be adjacent to either $w_{0k}$ or $w_{1k}$ and, consequently, we cope with this problem during the following vertex selection.
These facts are subsequently used without explicit mentioning.

Moreover, in the following vertex selection we write
\[G_i(x_0,x_1,y_{00},y_{01},y_{10},y_{11},z_{00},z_{01},z_{10},z_{11},[z_0,z_1])\]
to state that $G$ contains an induced $G_i$ for $i \in \{1, \ldots, 7\}$ on vertices $x_0$, $x_1$, $y_{00}$, $y_{01}$, $y_{10}$, $y_{11}$, $z_{00}$, $z_{01}$, $z_{10}$, $z_{11}$, optionally including $z_0,z_1$.
Hence, both vertex sets $x_0,x_1,y_{00},y_{10}$ and $x_0,x_1,y_{01},y_{11}$ form a clique in $G$ and every $z_{ij}$ is exactly adjacent to $x_i,y_{ij}$.
Depending on $i$ the vertices $z_0$ and $z_1$ are present and $z_i$ is exactly adjacent to $x_0,x_1,y_{0i},y_{1i}$ for $i \in \{0,1\}$.
The adjacency between $y_{00},y_{01},y_{10},y_{11}$ depends on $i$, too.

To find suitable vertices for the forbidden induced subgraphs, we have to distinguish between three cases:
\begin{enumerate}
\item\textbf{Assume that at most one of the sets $V_0, V_1, D_0, D_1$ is a clique in $G$:}
Because of symmetry we just have the following two subcases:
\begin{enumerate}
\item\textbf{Assume that at most $V_0$ is a clique in $G$:}
Because $D_0,D_1$ are not cliques, we can select vertices $v_{ij} \in P'_{ij}$ for all $i,j \in \{0,1\}$ such that $v_{00} \nad v_{11}$ and $v_{01} \nad v_{10}$.
At most one of the edges $v_{00}v_{01}$ or $v_{10}v_{11}$ is present in $E$ because otherwise $v_{00} \ad v_{01} \ad v_{11} \ad v_{10} \ad v_{00}$ is an induced $C_4$ in $G$.
If $v_{00} \nad v_{01}$ and $v_{10} \nad v_{11}$, then
\[G_1(u_0,u_1,v_{00},v_{01},v_{10},v_{11},w_{00},w_{01},w_{10},w_{11}).\]
Clearly, if $V_0$ is a clique, then $v_{00} \ad v_{01}$ and $v_{10} \nad v_{11}$ and then we have
\[G_2(u_0,u_1,v_{00},v_{01},v_{10},v_{11},w_{00},w_{01},w_{10},w_{11}).\]
\item\textbf{Assume that at most $D_0$ is a clique in $G$:}
Analogously to the previous case, we can select $v_{ij} \in P'_{ij}$ for all $i,j \in \{0,1\}$ such that $v_{00} \nad v_{01}$ and $v_{10} \nad v_{11}$ and again, either $v_{00} \ad v_{11}$ or $v_{01} \ad v_{10}$ as otherwise $v_{00} \ad v_{11} \ad v_{01} \ad v_{10} \ad v_{00}$ is an induced $C_4$ in $G$.
The case of $v_{00} \nad v_{01}$ and $v_{10} \nad v_{11}$ yields an induced $G_1$ and has already been handled in the first case.
If without loss of generality $v_{00} \ad v_{11}$, then
\[G_3(u_0,u_1,v_{00},v_{01},v_{10},v_{11},w_{00},w_{01},w_{10},w_{11}).\]
\end{enumerate}
\item\textbf{Assume that exactly two of the sets $V_0,V_1,D_0,D_1$ are cliques in $G$:}
If $V_0$ and $V_1$ are cliques but not $D_0$ and $D_1$, then vertices $v_{ij} \in P'_{ij}$ exist for all $i,j \in \{0,1\}$ such that $v_{00} \nad v_{11}$ and $v_{01} \nad v_{10}$ and consequently, $v_{00} \ad v_{01} \ad v_{11} \ad v_{10} \ad v_{00}$ is an induced $C_4$.
Analogously, $D_0$ and $D_1$ being the cliques implies $v_{00} \ad v_{11} \ad v_{01} \ad v_{10} \ad v_{00}$ as an induced $C_4$.
Because of this and symmetry, we have only one remaining case, namely $V_0$ and $D_0$ are the cliques and this implies that $C_{00}$ is a clique.

Next we show that there exist vertices $v_{ij} \in P'_{ij}$ for all $i,j \in \{0,1\}$ such that $v_{01} \nad v_{10}$ and $v_{10} \nad v_{11}$.
For that purpose, assume that every vertex in $P'_{10}$, that is adjacent to some vertex in $P'_{(1-k)1}$ for $k \in \{0,1\}$, is also adjacent to all vertices in $P'_{k1}$.
Then, as $V_1$ and $D_1$ are not cliques, there are vertices $x\not=y \in P'_{10}$ such that there is $x' \in P'_{01}$ and $y' \in P'_{11}$ with $x \nad x'$ and $y \nad y'$.
By our assumption, it follows that $x \ad y'$ and $x' \ad y$ and hence, there is $x \ad y \ad x' \ad y' \ad x$, an induced $C_4$ in $G$.
Consequently, the assumption was wrong and we can select the vertices such that $v_{01} \nad v_{10}$ and $v_{10} \nad v_{11}$.

Because $C_{00}$ is a clique, it follows by Claim \ref{cla_twvertices} that $w_1$ exists, and if $w_1$ is neither adjacent to $w_{01}$ nor to $w_{11}$, then
\[G_4(u_0,u_1,v_{00},v_{01},v_{10},v_{11},w_{00},w_{01},w_{10},w_{11},w_1).\]
Otherwise, if $w_1$ is adjacent to $w_{10}$, then we get
\[G_3(u_0,u_1,v_{00},w_1,v_{10},v_{11},w_{00},w_{01},w_{10},w_{11}),\]
and if $w_1 \ad w_{11}$, then
\[G_2(u_0,u_1,v_{00},v_{01},v_{10},w_1,w_{00},w_{01},w_{10},w_{11}).\]
\item\textbf{Assume that at least three of the sets $V_0,V_1,D_0,D_1$ are cliques in $G$:}
In this case, we select any vertex $v_{ij} \in P'_{ij}$ for all $i,j \in \{0,1\}$.
As at least one of the sets $C_{00} = V_0 \cup D_0$ or $C_{11} = V_1 \cup D_1$ is a clique, it follows from Claim \ref{cla_twvertices} that $w_1$ exists.
Analogously, $C_{01} = V_0 \cup D_1$ or $C_{10} = V_1 \cup D_0$ is a clique and thus, $w_0$ exists.

Assume first that $w_0$ and $w_1$ are completely disjoint from $w_{00}, w_{01}, w_{10}, w_{11}$.
By symmetry we just have to consider the cases of (i) $v_{01} \nad v_{10}$, which leads to
\[G_5(u_0,u_1,v_{00},v_{01},v_{10},v_{11},w_{00},w_{01},w_{10},w_{11},w_0,w_1),\]
(ii) $v_{10} \nad v_{11}$, which yields
\[G_6(u_0,u_1,v_{00},v_{01},v_{10},v_{11},w_{00},w_{01},w_{10},w_{11},w_0,w_1),\]
and (iii) $v_{00},v_{01},v_{10},v_{11}$ are a clique where
\[G_7(u_0,u_1,v_{00},v_{01},v_{10},v_{11},w_{00},w_{01},w_{10},w_{11},w_0,w_1).\]

Finally, we have to check all the cases where $w_0$ or $w_1$ are adjacent to one of the vertices $w_{00}, w_{01}, w_{10}, w_{11}$.
Because of symmetry we can simply assume that $w_0$ is adjacent to $w_{10}$.

Assume that $w_1$ is neither adjacent to $w_{01}$ nor to $w_{11}$.
Then (iv) $v_{00} \ad v_{01}$ and $v_{00} \nad v_{11}$ implies 
\[G_2(u_0,u_1,v_{00},v_{01},w_0,v_{11},w_{00},w_{01},w_{10},w_{11}),\]
(v) $v_{00} \nad v_{01}$ and $v_{00} \ad v_{11}$ yields
\[G_3(u_0,u_1,v_{00},v_{01},w_0,v_{11},w_{00},w_{01},w_{10},w_{11}),\]
and (vi) $v_{00} \ad v_{01}$ and $v_{00} \ad v_{11}$ gives
\[G_4(u_0,u_1,v_{00},v_{01},w_0,v_{11},w_{00},w_{01},w_{10},w_{11},w_1).\]

If $w_1 \ad w_{11}$, then (vii) $v_{00} \nad v_{01}$ implies
\[G_1(u_0,u_1,v_{00},v_{01},w_0,w_1,w_{00},w_{01},w_{10},w_{11}),\]
and (viii) $v_{00} \ad v_{01}$ yields
\[G_2(u_0,u_1,v_{00},v_{01},w_0,w_1,w_{00},w_{01},w_{10},w_{11}).\]
Moreover, if $w_1 \ad w_{01}$, then (ix) $v_{00} \nad v_{11}$ results in
\[G_1(u_0,u_1,v_{00},w_1,w_0,v_{11},w_{00},w_{01},w_{10},w_{11}),\]
and (x) $v_{00} \ad v_{11}$ provides
\[G_3(u_0,u_1,v_{00},w_1,w_0,v_{11},w_{00},w_{01},w_{10},w_{11}).\]
\end{enumerate}

The following shows the converse direction, that is, if $G$ contains one of $G_1, \ldots, G_7$ as an induced subgraph, then $\cA(G)$ has a bad $2$-cycle. 

We basically use Lemma \ref{lma_cliqueArrangement_inducedSubgraphs}.
The clique arrangement of all graphs $G_1, \ldots, G_7$ contains a bad $2$-cycle with starters $S_0=\{x_0\}$, $S_1=\{x_1\}$ and terminals $T_0=\{x_0,x_1,y_{00},y_{10}\}$, $T_1=\{x_0,x_1,y_{01},y_{11}\}$.
Moreover, there are nodes $P_{ij} = \{x_i,y_{ij}\}$, $Q_{ij} = \{x_i,y_{ij},z_{ij}\}$ such that $S_i \dad \ldots \dad P_{ij} \dad \ldots \dad T_j$ and $P_{ij} \dad \ldots \dad Q_{ij}$ for all $i,j \in \{0,1\}$.
If $G$ contains an induced subgraph $G_1, \ldots, G_7$, then there is a function $\phi$, that maps these nodes to some nodes of the clique arrangement $\cA(G)$ such that $\phi(S_i) \dad \ldots \dad \phi(P_{ij}) \dad \ldots \dad \phi(T_j)$ and $\phi(P_{ij}) \dad \ldots \dad \phi(Q_{ij})$ for all $i,j \in \{0,1\}$.

Assume that at least one of these four paths in $\cA(G)$, say $\phi(S_0) \dad \ldots \dad \phi(T_0)$, contains a node $X$ with $\phi(S_0) \cup \phi(S_1) \subseteq X \subseteq \phi(T_0) \cap \phi(T_1)$.
If $X$ is situated on the subpath $\phi(S_0) \dad \ldots \dad \phi(P_{00})$, then it follows that $X \subset Q_{00}$ and, hence, $x_1 \ad z_{00}$, a contradiction.

Hence, $X$ is on the subpath $\phi(P_{00}) \dad \ldots \dad \phi(T_0)$.
Here, $\phi(P_{00})$ is a subset of $X \subseteq \phi(T_0) \cap \phi(T_1)$ and thus, also a subset of $\phi(T_1)$.
This means that $y_{00} \in \phi(T_1)$, which implies $y_{00} \ad y_{01}$ and $y_{00} \ad y_{11}$.
Consequently, we are in the case were the induced subgraph in $G$ is one of $G_4, \ldots, G_7$.
The clique arrangement of all these graphs contains a sink $T'_1 = \{x_0,x_1,y_{01},y_{11},z_1\}$ that is reached from $T_1$.
In $\cA(G)$, we have $\phi(T_1) \dad \ldots \dad \phi(T'_1)$, thus, $\phi(P_{00}) \subset \phi(T'_1)$, which finally means that $y_{00} \ad z_1$, a contradiction.

Hence, $X$ does not exist and $\cA(G)$ contains a bad $2$-cycle with starters $\phi(S_0), \phi(S_1)$ and terminals $\phi(T_0), \phi(T_1)$.
\end{proof}

The main theorem, presented in this section, and Theorem \ref{thm_counterexamples} lead to the following conclusion:
\begin{corollary}\label{cor_bad2cyc_leafp}
Let $G=(V,E)$ be a graph in $\cL$ with clique arrangement $\cA(G)=(\cX,\cE)$.
The graph $\cA(G)$ does not contain a bad $2$-cycle.
\end{corollary}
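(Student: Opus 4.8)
The plan is to obtain Corollary~\ref{cor_bad2cyc_leafp} as an immediate consequence of Theorem~\ref{thm_bad2cyc_counterexp} and Theorem~\ref{thm_counterexamples}, using only the two elementary facts already noted in this paper: every leaf power is strongly chordal, and $\cL$ is closed under taking induced subgraphs. Concretely, I would argue by contradiction. Suppose $G \in \cL$ yet $\cA(G)$ contains a bad $2$-cycle. Since $G$ is a $k$-leaf power for some $k \geq 2$ and hence an induced subgraph of the $k$-th power of a tree, $G$ is strongly chordal, so Theorem~\ref{thm_bad2cyc_counterexp} applies and produces an index $i \in \{1, \ldots, 7\}$ such that $G$ contains $G_i$ as an induced subgraph.

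The second step is to invoke heredity of $\cL$ to reach a contradiction. If $T$ is a $k$-leaf root of $G$ and $W \subseteq V$ is the vertex set of the induced copy of $G_i$ in $G$, then iteratively deleting from $T$ every leaf not belonging to $W$ yields a tree $T'$ whose leaf set is exactly $W$; since only non-$W$ leaves are removed, the $T'$-distance between any two vertices of $W$ equals their $T$-distance, so $T'$ is a $k$-leaf root of $G[W]$. Thus $G_i \in \cL_k \subseteq \cL$, which contradicts Theorem~\ref{thm_counterexamples}, stating that none of $G_1, \ldots, G_7$ lies in $\cL$. Therefore $\cA(G)$ cannot contain a bad $2$-cycle.

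There is no genuine obstacle here, as the substantive work is carried by the two cited theorems; the only point worth spelling out is the (routine) observation that pruning non-terminal leaves of a leaf root preserves pairwise distances among the surviving leaves, which establishes that $\cL$ is hereditary. This fact is already used implicitly in the sentence following Theorem~\ref{thm_counterexamples} ("This implies that $G_1, \ldots, G_7$ are forbidden induced subgraphs for $\cL$"), so I would either restate it briefly or simply refer to that remark, after which the corollary follows in two lines.
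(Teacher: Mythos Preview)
Your proposal is correct and matches the paper's own reasoning: the corollary is stated without proof, merely as a direct consequence of Theorem~\ref{thm_bad2cyc_counterexp} and Theorem~\ref{thm_counterexamples}, together with the heredity of $\cL$ that the paper already invokes in the sentence after Theorem~\ref{thm_counterexamples}. Your explicit leaf-pruning argument for heredity is a welcome clarification but not a departure from the intended proof.
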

Hence, leaf powers fit naturally into the hierarchy of chordal graphs, right between strongly chordal graphs, which have clique arrangements without bad $k$-cycles for $k \geq 3$, and ptolemaic graphs, whose clique arrangements are entirely free of cycles.

\section{Conclusion and Future Directions}

In this paper, we were able to indicate that leaf powers $\cL$ are a natural subclass of strongly chordal graphs by showing that their clique arrangements are not only free of bad $k$-cycles for $k \geq 3$ but also for $k=2$. 
Moreover, we proved that the clique arrangement of a strongly chordal graph $G$ comprises a bad $2$-cycle, if and only if $G$ contains at least one of $G_1, \ldots, G_7$ as an induced subgraph.
This means that, beside the forbidden induced subgraphs of strongly chordal graphs, that is, the family of suns, this finite number of graphs describe a cyclic composition of cliques that is not realizable by a $k$-leaf root for any $k \geq 2$.

It remains for future work to find a complete characterization of $\cL$ in terms of forbidden subgraphs.
During our deep analysis of leaf powers we have considered a huge variety of graphs and their clique arrangements.
We have not a single example of a graph $G$ that has a clique arrangement $\cA(G)$ without bad $k$-cycles for $k \geq 2$, where a corresponding leaf root of $G$ is unknown.
Therefore, we conjecture that a strongly chordal graph $G$ has a $k$-leaf root for some $k \geq 2$, if and only if $\cA(G)$ is free of bad $2$-cycles.
If this was true, a polynomial time recognition algorithm is straight found by the efficient recognition of strongly chordal graphs and the possibility to check for a finite number of induced subgraphs in polynomial time.

Answering this question implies the challenge of constructing leaf roots from bad-cycle-free clique arrangements.
This turns out to be sophisticated, especially if the clique arrangement has $2$-cycles that are not bad.

\section{Technical Proofs}
\label{sec:techproofs}

\subsection*{The Proof of Lemma \ref{lma_node_to_sink_intersection}}

\begin{proof}
First of all, we know that $Y \not\subseteq Z$ and $Z \not\subseteq Y$ as otherwise $X,Y,Z$ are not distinct nodes.
Let $\cX_Y$ be the set of sinks reachable from $Y$ and $\cX_Z$ be the set of sinks reachable from $Z$.
If $Y$ and $Z$ are sinks themselves then $\cX_Y=\{Y\}$ and $\cX_Z=\{Z\}$ and trivially we set $C_1=Y$ and $C_2=Z$.

If just one of the nodes, $Y$ or $Z$, is a sink, without loss of generality, $\cX_Y=\{Y\}$, then Lemma \ref{lma_stronglyChordal_cliqueCutStructure} implies that there are sinks $A,B \in \cX_Z$ such that $A \not= B$ and $Z=A \cap B$.
Clearly, $A \not= Y$ and $B \not= Y$ as otherwise $Z \subseteq Y$.
By Lemma \ref{obs_node_intersection}, $A' = A \cap Y$ and $B' = B \cap Y$ are nodes in $\cX$ because both contain $X$.
It is straight forward that $A'\not=B'$ and moreover, $A'\not=Z$ and $B'\not=Z$ as otherwise $Z \subseteq Y$.
We have a $3$-cycle with starters $A',B',Z$ and terminals $A,B,Y$.
By Theorem \ref{thm_badcycles}, the cycle is not bad and as $Z \not\subseteq Y$ it follows that $A' \subseteq B$ or $B' \subseteq A$.
If $A' \subseteq B$, then $A' \subseteq A$ and $Z= A \cap B$ imply that $A' \subseteq Z$.
From $A' \subseteq Y$ and $X=Y \cap Z$ we obtain $A' \subseteq X$ and, as $X \subseteq A'$, we have $X=A'$.
Hence, in this case $X = A \cap Z$ and we set $C_1=A$ and $C_2=Y$.
In an analogous fashion we get $X = B \cap Y$, if $B' \subseteq A$ and then we set $C_1=B$ and $C_2=Y$.

If none of the nodes $Y,Z$ is a sink, then Lemma \ref{lma_stronglyChordal_cliqueCutStructure} implies that there are sinks $A\not= B \in \cX_Y$ and $C \not= D \in \cX_Z$ such that $Y=A \cap B$ and $Z=C \cap D$.
If $Y$ reaches one of the sinks $C$ and $D$ or $Z$ reaches one of the sinks $A$ and $B$, without loss of generality, $Z$ reaches $B$, then $Z \subseteq B \cap C$.
Notice that $Z$ cannot reach $A$ in this case, as otherwise $Z \subseteq Y$.
By Lemma \ref{obs_sink_intersection}, $Z$ is exactly the intersection of all cliques in $\cX_Z,$ which includes $B$.
By definition, $Z = C \cap D$ and, hence, we also have $Z = B \cap C \cap D$.
As $B \cap C \cap D \subseteq B \cap C$, we conclude that $Z = B \cap C$.
The node $A' = A \cap C$ exists by Lemma \ref{obs_node_intersection} because it contains $X$ as a subset.
Clearly, if $Y=A'$ or $Z = A'$ then $A' = A \cap B \cap C$.
Otherwise, we have a $3$-cycle with starters $Y,Z,A'$ and terminals $A,B,C$.
By Theorem \ref{thm_badcycles}, the cycle is not bad and as $Y \subseteq C$ or $Z \subseteq A$ implies $Y \subseteq Z$ or $Z \subseteq Y$, we have $A' \subseteq A \cap B \cap C$, again.
Because $Y = A \cap B$ and $Z = B \cap C$, this implies that $A' \subseteq Y$ and $A' \subseteq Z$, and because $X = Y \cap Z$, we have $A' \subseteq X$ and thus, $X=A'$.
Hence, $X = A \cap C$ and we set $C_1=A$ and $C_2=C$.

If neither $Y$ reaches the sinks $C$ or $D$ nor $Z$ reaches the sinks $A$ or $B$, then we consider the nodes $A' = A \cap C$ and $B' = B \cap D$, which exist by Lemma \ref{obs_node_intersection}, as they all contain $X$ as a subset.
Clearly, if $A'$ or $B'$ coincides with $Y$, then $Y$ reaches one of the sinks $C$ or $D$ and analogously, $A'$ and $B'$ are not $Z$.
Moreover, $A'=B'$ implies that $A'$ is a subset of $Y$ and $Z$ and, by that, a subset of $X$, which means that $X=A'= A \cap C$ and that $C_1=A$ and $C_2=C$.
Otherwise, as $Y\not=Z$, we get a $4$-cycle with starters $A',B',Y,Z$ and terminals $A,B,C,D$.
Theorem \ref{thm_badcycles} states that the cycle is not bad and as $Y \not\subseteq C \cup D$ and $Z \not\subseteq A \cup B$, it must be true, without loss of generality, that $A' \subseteq B$ and hence, we get that $A' \subseteq Y$ and that there is a $3$-cycle with starters $A',B',Z$ and terminals $B,C,D$.
This cycle is not bad, either, and hence, either $A' \subseteq D$ or $B' \subseteq C$.
If $A' \subseteq D$, then $A' \subseteq Z$, which implies $A' \subseteq X$ and thus, $X=A'=A \cap C$ and then $C_1=A$ and $C_2=C$.

The case $B' \subseteq C$ implies that $B' \subseteq Z$ and then we consider the node $C' = A \cap D$, which exists because it contains $X$ as a subset.
If $B'=C'$, $B'$ is contained in $X = Y \cap Z$, which means that $X=B' = B \cap D$ and that $C_1=B$ and $C_2=D$.
Otherwise, we have a $3$-cycle with starters $B',C',Y$ and terminals $A,B,D$.
Because the cycle is not bad, it is true that $B' \subseteq A$ or $C' \subseteq B$.
If $B' \subseteq A$, then $B' \subseteq Y$, which implies $B' \subseteq X$ and thus, $X=B'=B \cap D$ and then $C_1=B$ and $C_2=D$.
In the other case, $C'$ is contained in $A$ and $B$ and thus, in $Y$.
Moreover, $C'$ is a subset of $B \cap D$, thus, a subset of $B'$, and, consequently, contained in $C \cap D$, which means that $C'$ is also in $Z$.
Together, this implies that $X=C'= A \cap D$ and that $C_1=A$ and $C_2=D$.
\end{proof}

\subsection*{The Proof of Lemma \ref{lma_cliqueArrangement_inducedSubgraphs}}

\begin{proof}
We first fix a function $\phi$.
For that purpose notice that for every maximal clique $C$ of $H$, there is at least one maximal clique $C'$ in $G$ such that $C \subseteq C'$ and we define $\phi(C) = C'$.
Notice that $C = \phi(X) \cap V$.
Moreover, for every node $X \in \cX$ that is not a maximal clique, there is the subset $C_1, \ldots, C_k$ of maximal cliques in $H$ that are reached in $\cA(H)$ by a directed path from $X$.
As $X = C_1 \cap \ldots \cap C_k$, we define $\phi(X) = X'$ for the node $X' \in \cX'$ that fulfills  $X' = \phi(C_1) \cap \ldots \cap \phi(C_k)$.

The proof is completed by showing the two declared properties for all $X,Y \in \cX$:
\begin{enumerate}
\item
Since $\phi$ is a function, $X=Y$ implies $\phi(X) = \phi(Y)$.
Conversely, if $\phi(X) = \phi(Y)$ but $X\not=Y$, then there are non-adjacent vertices $x \in X \setminus Y$ and $y \in Y \setminus X$, which are, by definition, both in $\phi(X)$, a contradiction.
\item
By definition, there is a directed path from $\phi(X)$ to $\phi(Y)$ in $\cA(G')$, if $\phi(X) \subseteq \phi(Y)$.
As $X = \phi(X) \cap V$ and $Y = \phi(Y) \cap V$ this implies $X \subseteq Y$, which, by definition, means that there is a directed path from $X$ to $Y$ in $\cA(G)$.

Conversely, let there be a directed path from $X$ to $Y$ in $\cA(G)$, thus, let $X \subseteq Y$.
This means that in $\cA(G)$ the set $C_1, \ldots, C_k$ of maximal cliques reached from $Y$ is a subset of the maximal cliques $C_1, \ldots, C_\ell$ reached from $X$, hence, $k \leq \ell$.
Consequently, $\phi(C_1) \cap \ldots \cap \phi(C_\ell) = \phi(X) \subseteq \phi(Y) = \phi(C_1) \cap \ldots \cap \phi(C_k)$, which implies that there is a directed path from $\phi(X)$ to $\phi(Y)$ in $\cA(G')$.
\end{enumerate}
\end{proof}

\subsection*{The Proof of Theorem \ref{thm_counterexamples}}

\begin{proof}
The proof works basically the same as in \cite{BroLow1986}.
Assume that at least one of the graphs $G_1, \ldots, G_7$ is a $k$-leaf power of a tree $T$ for some $k \geq 2$ and that $x'_0, x'_1,y'_{00},y'_{01},y'_{10},y'_{11}$ are the parent nodes of the leaves $x_0, x_1,y_{00},y_{01},y_{10},y_{11}$ in $T$.

Consider for all $i,j \in \{0,1\}$ the path $P_{ij} = x'_i \ad \ldots \ad y'_{ij}$ in $T$ as well as, for all $i \in \{0,1\}$, the path $P_i = y'_{0i} \ad \ldots \ad y'_{1i}$ in $T$.
From Lemma \ref{lma_criticalEdges} we get that $P_{00} \cap P_{10} = \emptyset$ and $P_{00} \cap P_{11} = \emptyset$.
Similarly, Lemma \ref{lma_criticalEdges} implies that $P_{01} \cap P_{10} = \emptyset$ and $P_{01} \cap P_{11} = \emptyset$.
This means that the subtree $T_0$ of $T$ given by the union $P_{00} \cup P_{01}$ is disjoint from the subtree $T_1$ of $T$ given by $P_{10} \cup P_{11}$.

As $T$ is a tree, there is a node $z$ situated on the path connecting the subtrees $T_0$ and $T_1$ such that $z$ is on every path $x \ad \ldots \ad y$ in $T$ that connects a node $x$ from $T_0$ and a node $y$ from $T_1$.
In particular, that also means that $z$ is on $P_0$, if $x=y'_{00}$ and $y=y'_{10}$, and that $z$ is on $P_1$, if $x=y'_{01}$ and $y=y'_{11}$.
Hence, $P_0 \cap P_1 \not= \emptyset$, as both paths contain $z$, which is a contradiction to Lemma \ref{lma_criticalEdges}.
\end{proof}

\subsection*{The Proofs of Claims in Theorem \ref{thm_bad2cyc_counterexp}}

The claims proved in the following are stated in a general and simple fashion, and they often use indices $i,j \in \{0,1\}$ for the occurring nodes.
However, because the bad $2$-cycle is symmetric, the proofs always show the individual statements just for the case $i=j=0$ without explicit indication.

\subsubsection*{The Proof of Claim \ref{cla_pnodes}}

\begin{proof}
As mentioned, we show the claim only for $i=j=0$.

We start by choosing an arbitrary path $B_{00}$ from $S_0$ to $T_0$ that does not contain a node $X$ with $S_0 \cup S_1 \subseteq X \subseteq T_0 \cap T_1$, which exists by the definition of bad $2$-cycles.
Obviously, this implies that the node $T$ is not on $B_{00}$.

Firstly, there are nodes $P,P'$ on the path $B_{00} = S_0 \dad \ldots \dad P \dad P' \dad \ldots \dad T_0$ that are joined by an arc $P \dad P'$ such that $P \subseteq T$ and $P' \not\subseteq T$ and $P' \not= T_0$, hence, on $B_{00}$, the node $P$ is the last exit to $T$.
Clearly, we have $S_0 \subseteq T$ and thus, if such arc does not exist, then every node on the path, except $T_0$ itself, would be a subset of $T$.
Because $T$ is not on $B_{00} = S_0 \dad \ldots \dad Q \dad T_0$, even the predecessor $Q$ of $T_0$ reaches $T$ by a directed path.
Hence, as $T \subset T_0$, there is a directed path $Q \dad \ldots \dad T \dad \ldots \dad T_0$ and, consequently, the arc $Q \dad T_0$ is transitive, a contradiction.

Next we show that $S_1 \subseteq P'$ implies also that $S_1 \subseteq P$.
This can be seen by the use of the intersection node $X = P' \cap T$, which entirely contains $S_1$ because $S_1 \subset P'$ and $S_1 \subseteq T$.
As $P' \not\subseteq T$ and $X \subseteq T$, it follows that $X$ is not equal to the node $P'$.
Moreover, since $P \subseteq P'$ and $P \subseteq T$, it follows that $P \subseteq X$ and hence, there is a path $P \dad \ldots \dad X \dad \ldots \dad T$.
But $X$ cannot be a node on that path, unless $X=P$, because otherwise $P \dad P'$ would be a transitive arc.
But $X=P$ implies that $B_{00} = S_0 \dad \ldots \dad X=P \dad \ldots \dad T_0$ passes a node that fulfills $S_0 \cup S_1 \subseteq X = P \subseteq T$, which is a contradiction to the selection of the bad $2$-cycle.
Hence, $S_1 \not\subseteq P'$ must be true.

However, $P'$ is not necessarily the node $P_{00}$ we are looking for.
Particularly, it may happen that no sink $Q$ of $\cA(G)$ fulfills $Q \cap T_0 = P'$.
For that reason, let $Q_1, \ldots, Q_r$ be the sinks reachable from $P'$ by directed paths and let $P'_1 = Q_1 \cap T_0, \ldots, P'_r = Q_r \cap T_0$.
Because $P' = P'_1 \cap \ldots \cap P'_r$, Lemma \ref{obs_sink_intersection} implies that, if $S_1 \subseteq P'_i$ for all $i \in \{1, \ldots, r\}$, then $S_1 \subseteq P'$.
Hence, we can select $i \in \{1, \ldots,r\}$ such that $S_1 \not\subseteq P'_i$ and we set $P_{00} = P'_i$ and $Q_{00} = Q_i$.

Of course, it may happen that $P_{00}$ is not on the path $B_{00}$, but now we have a new path $B' = S_0 \dad \ldots \dad P' \dad \ldots \dad P_{00} \dad \ldots \dad T_0$.
We use $B'$ as a replacement for $B_{00}$, because it is easy to see that it does not contain a node $X$ with $S_0 \cup S_1 \subseteq X \subseteq T$, too.
If such a node $X$ was on the subpath $S_0 \dad \ldots \dad P_{00}$, then $S_1 \subset P_{00}$, and, if it was on the subpath $P_{00} \dad \ldots \dad T_0$, then $P_{00} \subset T$, which both contradicts the construction of $P_{00}$.

Finally, as $P_{00} = Q_{00} \cap T_0$, it follows that $S_1 \not\subseteq Q_{00}$, as otherwise $S_1 \subseteq T_0$ implies that $S_1 \subseteq P_{00}$, too.
\end{proof}

\subsubsection*{The Proof of Claim \ref{cla_qnode:intersection}}

\begin{proof}
The case $Q_{00} = Q_{1j'}$ is impossible for all $j' \in \{0,1\}$, because then $S_1 \subseteq Q_{1j'}$ implies $S_1 \subseteq Q_{00}$, which is forbidden by Claim \ref{cla_pnodes}.
If we assume that $Q_{00} = Q_{01}$, then we get a $3$-cycle with starters $P_{00},P_{01},S_1$ and terminals $T_0,T_1,Q_{00}$.
Certainly, $P_{00}$ is not contained in $T_1$ as otherwise $P_{00} \subseteq T_0$ implies $P_{00} \subseteq T$, which is forbidden by Claim \ref{cla_pnodes}.
Similarly, we get that $P_{01} \not\subseteq T_0$.
That $S_1 \not\subseteq Q_{00}$ is a direct consequence of Claim \ref{cla_pnodes}.
Hence, the $3$-cycle is bad, a contradiction to Theorem \ref{thm_badcycles}.
\end{proof}

\subsubsection*{The Proof of Claim \ref{cla_pnode:intersection:vertical}}

\begin{proof}
Let $S$ be the node representing the intersection $P_{00} \cap P_{01}$, which exists by Lemma \ref{obs_node_intersection} as $S_0 \subseteq S$.
If we assume that $S_0 \not= S$, then we have two paths $B'_{00} = S \dad \ldots \dad P_{00} \dad \ldots \dad T_0$ and $B'_{01} = S \dad \ldots \dad P_{01} \dad \ldots \dad T_1$ and we obtain a $2$-cycle with starters $S,S_1$ and terminals $T_0,T_1$.
We show that this cycle is bad by arguing that none of $B'_{00}$, $B'_{01}$, $B_{10}$, and $B_{11}$ contains a node $X$ that fulfills $S \cup S_1 \subseteq X \subseteq T$.
Clearly, the existence of $X$ on one of $B_{10}$, $B_{11}$, $P_{00} \dad \dots \dad T_0$, and $P_{01} \dad \dots \dad T_1$ contradicts to the choice of $B_{00}$, $B_{01}$, $B_{10}$, and $B_{11}$.

If $X$ was on the path $S \dad \ldots \dad P_{00}$, then we would get $S_1 \subset P_{00}$, which has been eliminated in Claim \ref{cla_pnodes}.
Similarly, the path $S \dad \ldots \dad P_{01}$ does not contain $X$, and hence, the new $2$-cycle is bad.
But this contradicts to the choice of the primal bad $2$-cycle, because, by $S_0 \subset S$, we obtain $|S_0|+|S_1| < |S|+|S_1|$.
\end{proof}

\subsubsection*{The Proof of Claim \ref{cla_pqnode:intersection:hor:dia}}

\begin{proof}
If $Q_{00} \cap Q_{1i'} = \emptyset$, then clearly $Q_{00} \cap Q_{1i'} \subseteq T$.
Otherwise, let $Q$ be the intersection node for $Q_{00} \cap Q_{1i'}$, which exists by Lemma \ref{obs_node_intersection}.
We get a $3$-cycle with starters $S_0,S_1,Q$ and terminals $Q_{00},Q_{10},T$.
If $Q \not\subseteq T$ then the $3$-cycle is bad, because $S_0 \not\subseteq Q_{10}$ and $S_1 \not\subseteq Q_{00}$ by Claim \ref{cla_pnodes}.
This contradicts Theorem \ref{thm_badcycles}.

Clearly, we have $P_{00} \cap P_{10} \subseteq Q_{00} \cap Q_{10} \subseteq T$ and $P_{00} \cap P_{11} \subseteq Q_{00} \cap Q_{11} \subseteq T$.
\end{proof}

\subsubsection*{The Proof of Claim \ref{cla_uvertices}}

\begin{proof}
If $u_0$ does not exist, then $S_0$ is a subset of $Q_{10} \cup Q_{11}$.
As $S_0$ cannot be entirely contained in a single set, $Q_{10}$ or $Q_{11}$, we find two distinct nodes $X = S_0 \cap Q_{10}$ and $Y = S_0 \cap Q_{11}$ by Lemma \ref{obs_node_intersection}.
The same lemma reveals the existence of a node $Z = Q_{10} \cap Q_{11}$, because $Z$ contains at least as $S_1$.

We get a $3$-cycle with starters $X,Y,Z$ and terminals $S_0,Q_{10},Q_{11}$.
We show that this cycle is bad by the help of $S_0 = (S_0 \cap Q_{10}) \cup (S_0 \cap Q_{11})$.
Firstly, $X$ cannot be a subset of $Q_{11}$, because otherwise $Q_{11}$, which already contains $Y=S_0 \cap Q_{11}$, contains also $S_0 \cap Q_{10}$, which would imply that $S_0 \subseteq Q_{11}$.
Secondly and similarly, $Y$ cannot be a subset of $Q_{10}$, because otherwise $S_0 \subseteq Q_{10}$.
Finally, by $S_1 \subseteq Z$, it follows that $Z \not\subseteq S_0$.
This bad $3$-cycle contradicts Theorem \ref{thm_badcycles}, hence, the node $u_0$ exists.
\end{proof}

\subsubsection*{The Proof of Claim \ref{cla_pwvertices}}

\begin{proof}
As the $Q$-nodes represent distinct maximal cliques in $G$, Lemma \ref{lma_vertex_node_none_adjacency} allows to select vertices $x \in Q_{00} \setminus Q_{10}$ and  $y \in Q_{00} \setminus Q_{11}$ such that $x$ is not adjacent to any vertex in $Q_{10} \setminus Q_{00}$ and $y$ is not adjacent to any vertex in $Q_{11} \setminus Q_{00}$.

We show that at least one of $x$ and $y$ is not adjacent to all vertices in $(Q_{10} \cup Q_{11}) \setminus Q_{00}$.
If $x=y$ we are done.
Otherwise, assume that $x$ has a neighbor $x' \in Q_{11} \setminus Q_{00}$ and that $y$ has a neighbor $y' \in Q_{10} \setminus Q_{00}$.
As $x \nad y'$ and $y \nad x'$ and $x \ad y$, it follows that $x' \nad y'$ as otherwise $G$ contains $x \ad y \ad y' \ad x' \ad x$ as an induced $C_4$.

Now consider the vertex $u_1$, which is at the same time in $Q_{10} \setminus Q_{00}$ and in $Q_{11} \setminus Q_{00}$ according to Claim \ref{cla_uvertices}.
Hence, according to the choice of $x$ and $y$, we have $x \nad u_1$ and $y \nad u_1$.
Moreover, as $u_1$ and $x'$ are both in $Q_{11} \setminus Q_{00}$ and because $u_1$ and $y'$ are both in $Q_{10} \setminus Q_{00}$, we get $x' \ad u_1$ and $y' \ad u_1$, which implies that $G$ has $x \ad y \ad y' \ad u_1 \ad x' \ad x$ as an induced $C_5$.
This means, our assumption was wrong and we let $w_{00}$ be a vertex in $\{x,y\}$ that has no neighbors in $Q_{10} \setminus Q_{00}$ and in $Q_{11} \setminus Q_{00}$.

First of all, we have already seen that $w_{00}$ is not adjacent to $u_1$.
Therefore, $w_{00}$ is in $Q_{00} \setminus P_{00}$, as every vertex in $P_{00}$ is adjacent to $u_1$ by $P_{00} \cup \{u_1\} \subseteq T_0$.
Moreover, this means that $w_{00} \not= w_{10}$ and $w_{00} \not= w_{11}$ as both, $w_{10}$ and $w_{11}$, are adjacent to $u_1$, which follows from $\{w_{10},u_1\} \subseteq Q_{10}$ and $\{w_{11},u_1\} \subseteq Q_{11}$.
As $w_{10} \in Q_{10} \setminus Q_{00}$ and $w_{11} \in Q_{11} \setminus Q_{00}$, it follows also that $w_{00}$ is not adjacent to $w_{10}$ and $w_{11}$.

From Claim \ref{cla_pqnode:intersection:hor:dia} we know that $Q_{00} \cap Q_{10}$ and $Q_{00} \cap Q_{11}$ are subsets of $T$.
Because $P_{10} = Q_{10} \cap T_0$ and $P_{11} = Q_{11} \cap T_1$, this means also that $Q_{00} \cap P_{10} = Q_{00} \cap Q_{10} \cap T_0 \subseteq T$ and $Q_{00} \cap P_{11} = Q_{00} \cap Q_{11} \cap T_1 \subseteq T$.
Hence, from $P'_{10} = P_{10} \setminus T$ and $P'_{11} = P_{11} \setminus T$ it follows already that $w_{00}$ is not adjacent to vertices in $P'_{10}$ or in $P'_{11}$.
It remains to show that $w_{00} \not= w_{01}$, that $w_{00} \nad w_{01}$ and that $w_{00}$ is not adjacent to any vertex in $P'_{01}$.

If $W = N(w_{00}) \cap P'_{01}$ is an empty set, then $w_{00}$ is not adjacent to vertices in $P'_{01}$.
Otherwise, if $W$ is not empty, assume that there are vertices $x \in W$ and $y \in P'_{00}$ such that $x \nad y$.
Recall that $w_{00}$ is adjacent to all vertices in $P'_{00}$ including $y$ and not adjacent to $u_1$.
Unlike $w_{00}$, the vertices $x$ and $y$ are adjacent to $u_1$, because $\{u_1,y\} \subseteq T_0$ and $\{u_1,x\} \subseteq T_1$.
This means that $G$ has $w_{00} \ad x \ad u_1 \ad y \ad w_{00}$ as an induced $C_4$, a contradiction.

Consequently, $W \cup P'_{00}$ is a clique in $G$ and there is a maximal clique of $G$ represented by a sink $T'$ of $\cA(G)$ such that $(W \cup P'_{00} \cup \{u_1\}) \subseteq T'$.
Because $T = T_0 \cap T_1$, it follows from Lemma \ref{lma_node_to_sink_intersection} that there are distinct sinks $T'_0$, reachable from $T_0$, and $T'_1$, reachable from $T_1$, such that $T = T'_0 \cap T'_1$.
Clearly, $T'_0 \not= T'_1$ and because $u_1 \in T'_0$ and $u_1 \in T'_1$, we have $T' \not= T'_0$ and $T' \not= T'_1$.
We let $X = T' \cap T'_0$ and $Y = T' \cap T'_1$ and obtain a $3$-cycle with starters $X,Y,T$ and terminals $T',T'_0,T'_1$.
By construction, there are vertices $x \in P'_{00}$ and $y \in P'_{01}$ that are also contained in $T'$.
Because $P'_{00} \subseteq T'_0$ and $P'_{01} \subseteq T'_1$, it follows that $x \in T'_0$ and $y \in T'_1$ and this in turn means that $x \in X$ and $y \in Y$.
Consequently, $X \not\subseteq T'_1$, as otherwise $x \in T'_0 \cap T'_1 = T$, which is a contradiction to the construction $P'_{00} = P_{00} \setminus T$.
Analogously, we have $Y \not\subseteq T'_0$.
Finally, $T \subseteq T'$ implies that all vertices in $T$, including $u_1$, are adjacent to $w_{00}$, which is impossible.
This means that the $3$-cycle is bad and, hence, $W$ has to be empty and $w_{00}$ is not adjacent to any vertex in $P'_{00}$. 

As $w_{01}$ is adjacent to all vertices in $P'_{01}$, it follows that $w_{00} \not= w_{01}$.
Assume that $w_{00}$ and $w_{01}$ are adjacent and select any vertices $x \in P'_{00}$ and $y \in P'_{01}$.
If $x \ad y$, then we obtain $w_{00} \ad x \ad y \ad w_{01} \ad w_{00}$ as an induced $C_4$ in $G$, and otherwise, we get $w_{00} \ad x \ad u_1 \ad y \ad w_{01} \ad w_{00}$ as an induced $C_5$ in $G$.
Hence, $w_{00} \ad w_{01}$ cannot be true.
\end{proof}

\subsubsection*{The Proof of Claim \ref{cla_tsinks}}

\begin{proof}
Let $R_{00}$ be a sink of $\cA(G)$ that represents one of the maximal cliques of $G$ with $C_{00} \subseteq R_{00}$.
Consider the node $T'_1$ that results from the intersection $R_{00} \cap T_1$, which exists by Lemma \ref{obs_node_intersection}, as both, $R_{00}$ and $T_1$, contain $P_{01} \cup P_{11}$.
If $T'_1 \subset T_1$, we get a new $2$-cycle with starters $S_0,S_1$ and terminals $T_0,T'_1$.
Assume that there is a node $X$ with $S_0 \cup S_1 \subseteq X \subseteq T_0 \cap T'_1$ on one of $B_{00}$, $B_{10}$, $B'_{01} = S_0 \dad \dots \dad P_{01} \dad \dots \dad T'_1$, and $B'_{11} = S_1 \dad \dots \dad P_{11} \dad \dots \dad T_1'$.
If $X$ is neither on the path $P_{01} \dad \ldots \dad T'_1$ nor on the path $P_{11} \dad \ldots \dad T'_1$, then $X$ is located on one of the paths $B_{00}$, $B_{01}$, $B_{10}$, and $B_{11}$, a contradiction to the choice of these paths.
If $P_{01} \dad \ldots \dad X \dad \ldots \dad T'_1$, then it follows that $P_{01} \subseteq X \subseteq T_0 \cap T'_1 \subset T$, which is impossible due to the construction of $P_{01}$.
The same holds if $X$ is on the path $P_{11} \dad \ldots \dad T'_1$.
Hence, the new $2$-cycle is bad.
This is a contradiction to the choice of the primal cycle, because, by $T'_1 \subset T_1$, we have $|T_0|+|T_1'| < |T_0|+|T_1|$.
Consequently, $T'_1$ equals $T_1$, and thus, $T_1 \subset R_{00}$, which means that $T_1$ is not a sink in $\cA(G)$.
\end{proof}

\subsubsection*{The Proof of Claim \ref{cla_twvertices}}

\begin{proof}
Lemma \ref{lma_node_to_sink_intersection} implies the existence of two distinct sinks $T'_0$, reachable from $T_0$, and $T'_1$, reachable from $T_1$, such that $T'_0 \cap T'_1 = t$.
Because $C_{00}$ is a clique, Claim \ref{cla_tsinks} implies that $T_1$ is not a sink, hence, $T'_1 \not= T_1$.
From Lemma \ref{lma_vertex_node_none_adjacency} it follows that $T'_1 \setminus T'_0$ contains at least one vertex $w_1$ that is not adjacent to any vertex in $T'_0 \setminus T'_1$.

As $C_{00}$ is a clique, all vertices in $P_{01}$ and in $P_{11}$ are adjacent to all vertices in $P_{00}$.
Consequently, $w_1$ is not in $P_{01} \cup P_{11}$.
If $w_0$ exists, then it can neither be the same vertex as $w_1$ nor be adjacent to $w_1$, because $w_0 \in T'_0 \setminus T'_1$.
Clearly, by Claim \ref{cla_pwvertices}, $w_1$ is not one of the vertices $w_{00},w_{10}$, because, unlike $w_1$, they are adjacent to vertices in $T_0 \setminus T$.
Moreover, Claim \ref{cla_pwvertices} implies that $w_1$ is not $w_{01}$, because, unlike $w_{01}$, the vertex $w_1$ is adjacent to all vertices in $P'_{11}$.
Similarly $w_1$ is not $w_{11}$.

It remains to show that $w_1$ is not adjacent to $w_{00}$ and $w_{10}$ and adjacent to at most one vertex $w_{10}$ or $w_{11}$.
If $w_1 \ad w_{00}$, then we can select any vertex $x \in P'_{01}$ and get $w_1 \ad u_1 \ad x \ad w_{00} \ad w_1$ as an induced $C_4$ in $G$. 
Analogously, if $w_1 \ad w_{10}$, then we select $x \in P'_{10}$ to find $w_1 \ad u_0 \ad x \ad w_{10} \ad w_1$ as induced $C_4$ in $G$.
Finally, if $w_1$ is adjacent to $w_{01}$ and $w_{11}$, then we select $x \in P'_{00}$ and get an induced $3$-sun in $G$ with central clique $u_0,u_1,w_1$ and independent set $x,w_{11},w_{01}$.
\end{proof}


\begin{thebibliography}{99}

\bibitem{BibDea1993}
  E. Bibelnieks and P.M. Dearing. 
  Neighborhood Subtree Tolerance Graphs. 
  Discrete Applied Mathematics (1993), 43, 13--26.

\bibitem{BraHun2008}
  A. Brandst\"adt, C. Hundt.
  Ptolemaic Graphs and Interval Graphs are Leaf Powers.
  LATIN 2008, LNCS 4957, 479--491.

\bibitem{BraHunManWag2009}
  A. Brandst\"adt, C. Hundt, F. Mancini and P. Wagner. 
  Rooted Directed Path Graphs are Leaf Powers.
  Discrete Mathematics (2009), 310(4), 897--910.

\bibitem{BraLe2006}
  A. Brandstädt and V.B. Le.
  Structure and Linear Time Recognition of 3-Leaf Powers.
  Information Processing Letters (2006),  98, 133--138. 

\bibitem{BraLe2008} 
  A. Brandstädt and V.B. Le. 
  Simplicial Powers.
  COCOA 2008, LNCS 5165, 160--170. 

\bibitem{BraLeRau2006}
  A. Brandstädt, V.B. Le and D. Rautenbach. 
  Distance-hereditary 5-Leaf Powers.
  Discrete Mathematics (2009), 309, 3843--3852.

\bibitem{BraLeSpi1999}
  A. Brandstädt, V.B. Le and J.P. Spinrad.
  Graph Classes: A Survey.
  SIAM Monographs on Discrete Mathematics and Applications (1999).

\bibitem{BraLeSri2006} 
  A. Brandstädt, V.B. Le and R. Sritharan. 
  Structure and Linear Time Recognition of 4-Leaf Powers.
  ACM Transactions on Algorithms (2008), 5(1).

\bibitem{BraWag2007}
  A. Brandstädt and P. Wagner.
  On $(k,\ell)$-Leaf Powers.
  MFCS 2007, LNCS 4708, 525--535.

\bibitem{BraWag2008} 
  A. Brandstädt and P. Wagner.
  On $k$- versus $(k+1)$-Leaf Powers. 
  COCOA 2008, LNCS 5165, 171--179. 

\bibitem{BroLow1986}
  M.W. Broin and T.J. Lowe.
  A Dynamic Programming Algorithm for Covering Problems with (Greedy) Totally Balanced Constraint Matrices.
  SIAM Journal on Algebraic Discrete Methods (1986), 7(3), 348--357.

\bibitem{ChaKo2007} 
  M.-S. Chang and T. Ko. 
  The 3-Steiner Root Problem.
  WG 2007, LNCS 4769, 109--120.

\bibitem{DomGuoHueNie2004}
  M. Dom, J. Guo, F. Hüffner and R. Niedermeier.
  Error Compensation in Leaf Root Problems.
  Algorithmica (2006), 44(4), 363--381. 

\bibitem{DomGuoHueNie2005}
  M. Dom, J. Guo, F. Hüffner and R. Niedermeier.
  Extending the Tractability Border for Closest Leaf Powers.
  WG 2005, LNCS 3787, 397--408.

\bibitem{Far1983}
  M. Farber.
  Characterizations of Strongly Chordal Graphs.
  Discrete Mathematics (1983), 43, 173--189.

\bibitem{Gav1974}
  F. Gavril.
  A Recognition Algorithm for the Intersection Graphs of Directed Paths in Directed Trees.
  Discrete Mathematics (1974), 13(3), 237--249.

\bibitem{How1981}
  E. Howorka.
  A Characterization of Ptolemaic Graphs.
  Journal of Graph Theory (1981), 5, 323--331.

\bibitem{NevRos2013}
  R. Nevries and C. Rosenke.
  Characterizing and Computing the Structure of Clique Intersections in Strongly Chordal Graphs.
  WG 2013, to appear.

\bibitem{NisRagThi2002}
  N. Nishimura, P. Ragde and D.M. Thilikos.
  On Graph Powers for Leaf Labeled Trees.
  Journal of Algorithms (2002), 42(1), 69--108.

\bibitem{Raute2006}
  D. Rautenbach.
  Some Remarks about Leaf Roots.
  Discrete Mathematics (2006), 306(13), 1456--1461.

\bibitem{UehUno2005}
  R. Uehara and Y. Uno.
  Laminar Structure of Ptolemaic Graphs and its Applications.
  ISAAC 2005, LNCS 3827, 186--195.
 
\end{thebibliography}
\end{document}